\theoremstyle{plain}
\newtheorem{thm}{Theorem}[]
\newtheorem*{cor*}{Corollary}
\newtheorem*{thm*}{Theorem}
\newtheorem{lem}[thm]{Lemma}
\newtheorem{cor}[thm]{Corollary}
\newtheorem{defn}[thm]{Definition}
\theoremstyle{remark}
\newcommand{\rc}[1]{{\color{red} #1}}
\newcommand{\sumab}[2]{\underset{#1}{\overset{#2}{\sum}}}
\newcommand{\suma}[1]{\underset{#1}{\sum}}
\newcommand{\bigcupab}[2]{\underset{#1}{\overset{#2}{\bigcup}}}
\renewcommand{\vec}[1]{{\bm{#1}}}
\begin{document} 

\title{On the entropy growth of sums of iid discrete random variables}

\date{\today}

\author{Riccardo Castellano}

\affiliation{Department of Applied Physics, University of Geneva, 1211 Geneva, Switzerland}

\author{Pavel Sekatski}
\affiliation{Department of Applied Physics, University of Geneva, 1211 Geneva, Switzerland}

\begin{abstract}
 We derive an asymptotic lower bound on the Shannon entropy $H$ of sums of $N$ arbitrary iid discrete random variables. The derived bound $H \geq \frac{r(X)}{2}\log(N) + {\it cst}$ is given in terms of the  incommensurability rank $r(X)$ of the random variable -- a positive integer quantity that we introduce.  The derivation does not rely on central limit theorems, but builds upon the known expressions of the asymptotic entropy of the multinomial distribution and sums of iid lattice random variables, which correspond to the case $r(X)=1$.
\end{abstract}
\maketitle

\section{Introduction and background}

\label{app: iid}

Let $\{X_{i}\}_{i \in \mathbb{N}}$ be a sequence of independent and identically distributed (iid) random variables that take values in a  countable  subset of real numbers $\chi\subset \mathbb{R}$, and have a finite variance ${\rm Var}(X)$. We are interested to lower bound the Shannon entropy $H$ of the sum $T_{N}:= \sumab{i=1}{N}X_{i}$ with an expression of the form
\begin{equation}\label{eq: desiderata}
    H(T_{N})\geq \frac{r(\chi)}{2} \log(N\lambda(X))+o(1).
\end{equation}

Historically, it is the case of continuos random variables that attracted more attention~\cite{Linnik,barron1986entropy,artstein2004solution,madiman2007generalized}. Here, it is the {\it differential} Shannon entropy $h(T_N)\to \frac{1}{2}\log(N)+\frac{1}{2}\log(2\pi e {\rm Var}(X))$ that converges~\cite{barron1986entropy} to that of a Gaussian random variable  from below~\cite{artstein2004solution,madiman2007generalized}.\\

In the case of discrete random variables, $T_N$ is supported on a discrete set, which plays a crucial role for the calculation of the Shannon entropy $H(T_N)$. A bound of the form~\eqref{eq: desiderata} is known for the particular case of {\it lattice} random variables, which we now define. 

\begin{defn}\label{defn:lattice}
$X$ is a {\bf lattice random variable}, if it takes values in a {\bf lattice set} $\chi$, i.e. such that 
\begin{align}\label{app: lattice}
    \chi \subset \big \{ h \, n + a : n \in \mathbb{Z}\big\}
\end{align}
for  some $h, a \in \mathbb{R}$. The {\bf maximal span} $h(X)=h(\chi)$ is the largest real number $h$ for which Eq.~\eqref{app: lattice} can be fulfilled. 
\end{defn}
Notice that random variables that only take one or two values ($|\chi| \leq 2$) are always a lattice. 
\begin{thm}\cite{EntropySumIID,takano1987convergence}\label{Th:LatticeEntropy}
    Let $X$ be a lattice random variable. Then in the $N \rightarrow \infty$ limit
    \begin{equation}\label{eq: greek thrm}
        H(T_{N})= \frac{1}{2}\log(N)+\frac{1}{2}\log(2\pi e \frac{{\rm Var}(X)}{h(\chi)^{2}})+o(1).
    \end{equation}
\end{thm}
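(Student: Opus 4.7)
\emph{Proof plan.} My approach is to reduce the claim to Gnedenko's local central limit theorem (LCLT) for integer-valued sums. Since Shannon entropy is invariant under injective relabeling of the support, the affine bijection $y=(x-a)/h(\chi)$ sends $X$ to an integer-valued variable $Y$ whose support generates $\mathbb{Z}$ (by maximality of $h(\chi)$). With $S_N:=\sum_iY_i$ one has $H(T_N)=H(S_N)$ and $\mathrm{Var}(Y)=\mathrm{Var}(X)/h(\chi)^2=:\sigma^2$, reducing the claim to
\begin{equation}
H(S_N)\;=\;\tfrac12\log N+\tfrac12\log(2\pi e\sigma^2)+o(1).
\end{equation}

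Gnedenko's LCLT, which applies because $Y$ has maximal span $1$ and finite variance, gives uniformly in $n\in\mathbb{Z}$
\begin{equation}
p_n:=P(S_N=n)\;=\;g_n+o(N^{-1/2}),\qquad g_n:=\frac{1}{\sqrt{2\pi N\sigma^2}}\exp\!\left(-\frac{(n-N\mu)^2}{2N\sigma^2}\right),
\end{equation}
where $\mu=E(Y)$. The key algebraic step is the identity
\begin{equation}
H(S_N)\;=\;-\sum_n p_n\log\tfrac{p_n}{g_n}\;+\;\sum_n p_n\log\tfrac{1}{g_n},
\end{equation}
whose second sum is exactly $\tfrac12\log(2\pi N\sigma^2)+E[(S_N-N\mu)^2]/(2N\sigma^2)=\tfrac12\log(2\pi eN\sigma^2)$, precisely the target asymptotic. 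The problem thus reduces to showing that the KL-type sum $\sum_n p_n\log(p_n/g_n)$ vanishes.

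To prove this, I would split according to a bulk $B_K:=\{n:|n-N\mu|\le K\sqrt N\}$. On $B_K$ for any fixed $K$, the additive LCLT error upgrades to a multiplicative statement $p_n=g_n(1+o(1))$ (since there $g_n\ge c(K)/\sqrt N$), so $\sum_{n\in B_K}p_n\log(p_n/g_n)\to 0$ as $N\to\infty$. The main obstacle is the tail $B_K^c$, where $\log(p_n/g_n)$ need not be small. I would control it by combining (i) the uniform LCLT bound $p_n\le C/\sqrt N$, which implies $|\log p_n|\le \tfrac12\log N+O(1)$; (ii) Chebyshev-type tail estimates $P(B_K^c)\to 0$ together with refined Gaussian-type tail bounds on $p_n$ (obtained from characteristic-function/Fourier methods or from elementary concentration); and (iii) uniform integrability of $(S_N-N\mu)^2/N$, which follows from $\mathrm{Var}(Y)<\infty$ via the $L^2$ central limit theorem. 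Taking a suitable joint limit $K=K_N\to\infty$, slow enough to preserve the bulk multiplicative approximation yet fast enough to beat the $\log N$ factor coming from $|\log p_n|$, then makes the tail contribution $o(1)$, completing the proof.
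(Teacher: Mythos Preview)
The paper does not prove this theorem: it is quoted as a known result from \cite{EntropySumIID,takano1987convergence} and used as a black box throughout (in particular inside the proof of Theorem~\ref{Th:EntropyForIncomm}). So there is no ``paper's own proof'' to compare your attempt against.

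That said, your plan is essentially the classical route taken in those references: reduce to span-$1$ integer sums, invoke Gnedenko's LCLT, and show that the discrete relative entropy of $S_N$ to the matched discretized Gaussian tends to zero. Your computation of the cross term $\sum_n p_n\log(1/g_n)=\tfrac12\log(2\pi e N\sigma^2)$ is exactly right and is the reason the constant comes out as stated.

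One genuine technical point deserves more care than your sketch gives it. In the tail $B_K^c$ you need both (a) $P(B_K^c)\cdot\tfrac12\log N\to 0$ and (b) the bulk multiplicative bound $p_n/g_n=1+o(1)$ to survive. With only a second moment, the LCLT remainder is an unspecified $o(N^{-1/2})$, so the bulk bound requires $e^{K_N^2/(2\sigma^2)}\cdot o(1)\to 0$, i.e.\ $K_N$ grows no faster than a rate dictated by the (unknown) LCLT error; Chebyshev alone gives $P(B_{K_N}^c)\le \sigma^2/K_N^2$, which forces $K_N^2\gg\log N$ for (a). These two constraints can conflict when the LCLT error decays very slowly. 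The cited proofs resolve this not by a naive bulk/tail split but by controlling $-\sum_{B_K^c} p_n\log p_n$ via a variance-based entropy bound on the conditional law (the entropy of an integer variable with second moment $V$ is $O(\log V)$), which replaces the crude $|\log p_n|\le\tfrac12\log N+O(1)$ and removes the need for $K_N^2\gg\log N$. With that refinement your outline goes through under the sole hypothesis ${\rm Var}(X)<\infty$.
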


 Upon a shift by $a$ and a rescaling by $\frac{1}{h}$, irrelevant for the entropy, the values of a lattice random variable are integer $n_i \in \mathds{Z}$. Intuitively, if $\chi$ is not an affine transformation of a set of integers, but contains numbers that are somehow {\it incommensurable},  there can only be fewer different combinations of outcomes that sum up to the same value of $T_{N}$, making the entropy $H(T_N)$ larger.  The goal of this paper is to formalize this intuition and derive a generalized version of Theorem~\ref{Th:LatticeEntropy} that applies for all discrete random variables.\\

As a last preliminary result we recall that the asymptotic entropy of a multinomial distribution has a simple expression.

\begin{thm}\label{th:multinomEntropy} \cite{MultinomialEntropy}
    Let $\bm{n}(N,\bm{p})\in \mathbb{N}^d$ be a multinomial random variable describing $N$ trials with $d$ different outcomes of probabilities $\bm{p}:=(p_{1},..p_{d})$. In the large $N$ limit its entropy is given by 
\begin{align}
    & H(\bm{n}(N,\bm{p})) = \frac{d-1}{2}\log(2\pi e N \sqrt[d-1]{p_{1}..p_{d}})+o(1) 
\end{align}
\end{thm}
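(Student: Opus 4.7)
The plan is to derive the asymptotic entropy by combining a local central limit theorem for the multinomial, obtained from Stirling's approximation, with a direct calculation of $-E[\log P(\bm{n})]$.

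First, I would exploit the constraint $\sum_{i=1}^{d} n_i = N$ to view $\bm{n}$ as living on a $(d-1)$-dimensional integer sublattice. Writing $n_i = N p_i + \delta_i$ and applying the refined Stirling formula $\log n! = n\log n - n + \tfrac{1}{2}\log(2\pi n) + O(1/n)$ to each factorial in $P(\bm{n})=\frac{N!}{n_1!\cdots n_d!}p_1^{n_1}\cdots p_d^{n_d}$, the terms linear in the $\delta_i$ cancel because $\sum_i \delta_i = 0$. A Taylor expansion of $n_i\log(n_i/(N p_i))$ in the bulk region $|\delta_i|\lesssim \sqrt{N\log N}$ then yields
\begin{equation}
\log P(\bm{n}) = -\frac{1}{2}\sum_{i=1}^{d}\frac{\delta_i^2}{N p_i} - \frac{d-1}{2}\log(2\pi N) - \frac{1}{2}\log(p_1\cdots p_d) + O\bigl(N^{-1/2}\bigr).
\end{equation}

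Next, I would substitute this into $H(\bm{n}) = -E[\log P(\bm{n})]$. The constants pull out of the expectation, and the remaining quadratic term averages to $\tfrac{1}{2}\sum_i E[\delta_i^2]/(N p_i) = \tfrac{1}{2}\sum_i (1-p_i) = \tfrac{d-1}{2}$, using $\mathrm{Var}(n_i) = N p_i (1-p_i)$. Absorbing this additive $\tfrac{d-1}{2}$ into a logarithm of $e$ produces
\begin{equation}
H(\bm{n}) = \frac{d-1}{2}\log(2\pi e N) + \frac{1}{2}\log(p_1\cdots p_d) + o(1),
\end{equation}
which matches the stated expression after rewriting $\tfrac{1}{2}\log(p_1\cdots p_d) = \tfrac{d-1}{2}\log\sqrt[d-1]{p_1\cdots p_d}$.

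The main technical obstacle is controlling the tail where Stirling's expansion breaks down, most severely near the boundary of the simplex with some $n_i$ of order unity. I would handle it with a concentration argument: a Bernstein-type inequality implies that the probability of $\bm{n}$ leaving the bulk $\{|\delta_i|\le \sqrt{CN\log N}\}$ is at most $N^{-C'}$ with $C'$ growing in $C$, and since the support of $\bm{n}$ has cardinality $O(N^{d-1})$, the entropy contribution of this tail is at most $N^{-C'}\log(N^{d-1}) = o(1)$. Finally, replacing the Riemann sum over the bulk lattice by the corresponding Gaussian integral introduces only an $O(N^{-1/2})$ correction, since the lattice spacing is $1$ while the Gaussian varies on scale $\sqrt{N}$.
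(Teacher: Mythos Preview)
The paper does not supply its own proof of this statement: Theorem~\ref{th:multinomEntropy} is quoted from \cite{MultinomialEntropy} and invoked as a black box in the subsequent arguments. There is therefore no in-paper derivation to compare against.

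Your sketch is correct and is the standard route to this asymptotic. The Stirling expansion of $\log P(\bm n)$ on the bulk, the identity $\tfrac{1}{2}\sum_i E[\delta_i^2]/(Np_i)=\tfrac{1}{2}\sum_i(1-p_i)=\tfrac{d-1}{2}$, and the concentration-based tail truncation are all sound. Two cosmetic remarks. First, with $|\delta_i|\lesssim\sqrt{N\log N}$ the cubic Taylor remainder in $n_i\log(n_i/(Np_i))$ is $O\bigl((\log N)^{3/2}/\sqrt N\bigr)$ rather than a clean $O(N^{-1/2})$, but this is still $o(1)$. Second, your tail bound via the support size is fine once you also add the $\epsilon\log(1/\epsilon)$ term coming from conditioning on the tail indicator; with $\epsilon=N^{-C'}$ this is again $o(1)$. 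Neither point affects the conclusion.
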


\section{Incommensurable partitions and the first generalization}

First, we need to define some notion of ``incommensurability'' of the set of values $\chi$ of the random variables $X_i$, which would quantify how far we are from the case of a lattice random variable. 
Consider a {\bf partition} of a set 
$\chi=\bigcupab{j=1}{k} \chi_j$ 
into $k$ disjoint subsets $\{\chi_{j} \}_{j=1..k}$. For each of these sets let 
\begin{equation}
\mathcal{S}_{m}(\chi_j) := \left\{ \sum_{x_{i}\in \chi_{j}}n_{i} x_{i} : 
n_i \in \mathbb{N}, \sum_i n_i = m
\right \}
\end{equation}
be the set of numbers that can be obtained by summing $m$, not necessarily different, numbers from $\chi_{j}$. For convenience we also introduce the following notation 
\begin{equation}
    \mathbb{N}^{k}_{N}:=\left\{ \bm{n}\in \mathbb{N}^{k} : \sum_{i=1}^{k} n_{i}=N \right\},
\end{equation}
for the set of natural $k$-vectors $\bm n$ whose elements sum up to $N$ (with $0 \in \mathbb{N}$).  With its help, we define the following property.

\begin{defn}\label{def: incom}
A collection of disjoint sets $\{\chi_{j} \}_{j=1..k}$ of 
real numbers (e.g. a partition of $\chi$) is called \textbf{incommensurable} (or $k$-incommensurable) if 
\begin{enumerate}
    \item All $\chi_{j}$ are non-empty and $\chi_j \neq \{0\}$.\\
\item
For all $N\in \mathbb{N}$, $\bm{m},\bm{n}\in \mathbb{N}^{k}_{N}$ and $ \vec{y}= (y_1,\dots,y_k)$, $\vec{y}'= (y_1',\dots,y_k')$ with $y_j\in S_{m_{j}}(\chi_{j})$ and $y_j' \in \mathcal{S}_{n_{j}}(\chi_j)$  we have 
\begin{equation}\label{eq: incom def}
\sum_{j=1}^k y_j = \sum_{j=1}^k y_j' \implies y_j =y_j' \quad \forall j \in [1..k].
\end{equation}
\end{enumerate}

\end{defn}

In words, for a random variable $X$ taking values in $\chi$, which admits a $k$-\textit{incommensurable} partition, any possible value $T_N = t$ can be \textit{uniquely} decomposed as $t=y_1+\dots + y_k$, where each $y_j$ is a sum of numbers sampled from $\chi_j$ and there are $N$ sampled numbers in total. This property implies a bijection between the outcome of $T_{N}$ and the $k$ outcomes of the following random variables 
\begin{equation}
Y_{j}^{(N)}:=\sumab{i=1}{N} X_{i} \, \mathbf{1}_{\chi_{j}}(X_{i}), \qquad \text{where} \qquad 
\mathbf{1}_{\chi_{j}}(x) = 
\begin{cases}
1 & x \in  \chi_{j} \\
0 & \text{otherwise}
\end{cases}
\end{equation}
are indicator functions verifying if $X_i$ takes value in $\chi_j$, and each $Y_{j}^{(N)}$ only sums the values drawn from $\chi_j$.\\

Some examples and observations on how to construct an incommensurable partition of $\chi$ can be found in App. \ref{IncommensurabilityExamples}.
In all of the examples the sets $\chi$ admit incommensurable partitions in sets $\chi_j$ which are lattices. For such cases, the entropy of $T_N$ can be bounded with the help theorems \ref{Th:LatticeEntropy} and \ref{th:multinomEntropy}, indeed we now prove the following bound.

\begin{thm}
    \label{Th:EntropyForIncomm}
    Let $X_1,\dots, X_N$ be a collection of iid discrete random variables taking values in the set $\chi$, which admits an $k$-incommensurable partition $\{\chi_j\}_{j=1,\dots,k}$, with  $p_j = {\rm Pr}(X_i \in \chi_j)$, and such that  each $\chi_j$ is a lattice, and $\chi_1,\dots \chi_s$ are the only sets with a single element. Then, the following bounds hold
    \begin{align}
            s=k : \,\,  H\left(\sum_{i=1}^N X_i \right) & = 
          \frac{k-1}{2}\log\left(2\pi e N \sqrt[k-1]{p_{1}p_{2}..p_{k}}\right)+o(1) \\ \label{eq: Ther6 gen lb}
          s\leq k-1 :\,\,   H\left(\sum_{i=1}^N X_i \right) & \geq
          \frac{k}{2} \log(2\pi e N)+ \frac{1}{2}\log\left( p_{1}\dots p_s (1- \sum_{j=1}^s p_j) \right) +\sum_{j=s+1}^k\frac{1}{2}\log( p_j \frac{{\rm Var}(\tilde X^{(j)})}{h(\chi_j)^{2}})+ o(1) \\
          s\leq k-2 :\, \, H\left(\sum_{i=1}^N X_i \right) & \leq  \frac{k+(k-s-1)}{2} \log(2\pi e N)+ \frac{1}{2}\log\left( p_{1}\dots p_s (1- \sum_{j=1}^s p_j) \right) \nonumber\\
          & +\sum_{j=s+1}^k\frac{1}{2}\log( p_j^2 \frac{{\rm Var}(\tilde X^{(j)})}{h(\chi_j)^{2}})+ o(1)
    \end{align}
    where the conditional random variables $\tilde X^{(j)}$ have probability distribution fixed as ${\rm Pr}(\tilde X^{(j)}=x) ={\rm Pr}(X=x| X \in \chi_j)$ and $h(\chi_j)$ is the maximal span of $\chi_j$. In addition, for $s=k-1$ the expression \eqref{eq: Ther6 gen lb} is an equality.
\end{thm}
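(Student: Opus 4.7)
The plan is to exploit the bijection $T_N \leftrightarrow (Y_1^{(N)}, \dots, Y_k^{(N)})$ granted by incommensurability (already observed below Definition \ref{def: incom}) together with the count vector $\bm{n} = (n_1, \dots, n_k)$, where $n_j = \#\{i : X_i \in \chi_j\} \sim \mathrm{Multinomial}(N,\bm{p})$. The key identity, obtained by the chain rule, is
\begin{equation*}
H(T_N) = H(Y_1^{(N)}, \dots, Y_k^{(N)}) = H(\bm{n}) + H(Y^{(N)} \mid \bm{n}) - H(\bm{n} \mid Y^{(N)}),
\end{equation*}
with $Y^{(N)} := (Y_1^{(N)}, \dots, Y_k^{(N)})$. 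The first term is provided by Theorem \ref{th:multinomEntropy}. The second decomposes through the conditional independence of the $Y_j^{(N)}$ given $\bm{n}$: the $j \leq s$ contributions vanish since $Y_j^{(N)} = a_j n_j$ is deterministic (using $\chi_j = \{a_j\}$ with $a_j \neq 0$), while each $j > s$ contribution is obtained by applying Theorem \ref{Th:LatticeEntropy} to the conditional sum of $n_j$ iid copies of $\tilde X^{(j)}$ and averaging over the binomial marginal of $n_j$ (with $\mathbb{E}[\log n_j] = \log(N p_j) + o(1)$ by concentration). The case $s = k$ collapses immediately to the claimed equality.

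For the lower bound ($s \leq k-1$), I upper bound $H(\bm{n} \mid Y^{(N)})$ by dropping the conditioning on $Y_{s+1}^{(N)}, \dots, Y_k^{(N)}$: since $Y_j^{(N)}$ determines $n_j$ for $j \leq s$,
\begin{equation*}
H(\bm{n} \mid Y^{(N)}) \leq H(\bm{n} \mid Y_1^{(N)}, \dots, Y_s^{(N)}) = H(n_{s+1}, \dots, n_k \mid n_1, \dots, n_s),
\end{equation*}
which is a conditional multinomial entropy, again asymptotically captured by Theorem \ref{th:multinomEntropy} after averaging over $(n_1, \dots, n_s)$. Substituting into the chain-rule identity and applying $H(\bm{n}) - H(n_{s+1}, \dots, n_k \mid n_1, \dots, n_s) = H(n_1, \dots, n_s)$ collapses the bound to $H(T_N) \geq H(n_1, \dots, n_s) + \sum_{j > s} H(Y_j^{(N)} \mid n_j) + o(1)$; a final use of Theorem \ref{th:multinomEntropy} on the coarsened multinomial $(n_1, \dots, n_s, N - \sum_{j \leq s} n_j) \sim \mathrm{Multinomial}(N,(p_1,\dots,p_s,1-\sum_{j\leq s} p_j))$ supplies the first term and reproduces the claimed expression. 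For $s = k - 1$ the sum constraint forces $n_k$ to be determined by $n_1, \dots, n_{k-1}$, hence by $Y_1^{(N)}, \dots, Y_{k-1}^{(N)}$, so $H(\bm{n} \mid Y^{(N)}) = 0$ exactly and the inequality upgrades to an asymptotic equality.

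For the upper bound ($s \leq k - 2$) one needs instead a \emph{lower} bound on $H(\bm{n} \mid Y^{(N)})$, which is the main technical step. The plan is to invoke the multivariate central limit theorem for the sum of $N$ iid random vectors $(\mathbf{1}_{\chi_j}(X_i), X_i \mathbf{1}_{\chi_j}(X_i))_{j=1..k}$: after centering and scaling by $\sqrt{N}$, the joint law of $(\bm{n}, Y_{s+1}^{(N)}, \dots, Y_k^{(N)})$ converges to a multivariate Gaussian with a covariance matrix that is explicit in terms of $\bm p$, the $\mu_j = \mathbb{E}[\tilde X^{(j)}]$ and the $\mathrm{Var}(\tilde X^{(j)})$. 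Conditioning on $(Y_{s+1}^{(N)}, \dots, Y_k^{(N)}, n_1, \dots, n_s)$, the $k - s - 1$ free components of $(n_{s+1}, \dots, n_{k-1})$ are asymptotically Gaussian on the integer lattice, so the discrete conditional entropy agrees with the differential entropy of the Gaussian limit, yielding $H(\bm{n}\mid Y^{(N)}) \geq \frac{k-s-1}{2}\log(2\pi e) + \frac{1}{2}\log\det\Sigma_{\mathrm{cond}} + o(1)$. A Schur-complement calculation of $\det\Sigma_{\mathrm{cond}}$ then produces the $\frac{k-s-1}{2}\log(2\pi e N)$ scaling together with the $(1 - \sum_{j\leq s} p_j)$ factor that distinguishes the upper from the lower bound. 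The main obstacle is precisely this last step: transferring the multivariate CLT into a \emph{quantitative} asymptotic for the discrete joint entropy uniformly over the conditioning variables (a multivariate analogue of the arguments behind Theorems \ref{Th:LatticeEntropy} and \ref{th:multinomEntropy}), and executing the linear-algebra determinant computation that isolates the precise constant.
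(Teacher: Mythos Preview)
Your treatment of the lower bound, the case $s=k$, and the equality at $s=k-1$ is correct and essentially the paper's argument reorganized: the paper splits $H(\bm Y_N)=H(n_N^{(1)},\dots,n_N^{(s)})+H(Y_N^{(s+1)},\dots,Y_N^{(k)}\mid n_N^{(1)},\dots,n_N^{(s)})$ and then lower-bounds the second term by conditioning further on $n_N^{(s+1)},\dots,n_N^{(k)}$, whereas you write $H(\bm Y_N)=H(\bm n)+H(\bm Y_N\mid\bm n)-H(\bm n\mid\bm Y_N)$ and upper-bound the last term by $H(n_{s+1},\dots,n_k\mid n_1,\dots,n_s)$. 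After the chain-rule cancellation both routes land on $H(n_1,\dots,n_s)+\sum_{j>s}H(Y_j^{(N)}\mid n_j)$, so they are the same proof in different bookkeeping.

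The upper bound is where your plan goes astray. In your own identity the only thing needed is a \emph{lower} bound on $H(\bm n\mid\bm Y_N)$, and the trivial one $H(\bm n\mid\bm Y_N)\geq 0$ already gives
\[
H(T_N)\ \leq\ H(\bm n)+H(\bm Y_N\mid\bm n)\ =\ \frac{k-1}{2}\log(2\pi e N)+\frac{k-s}{2}\log(2\pi e N)+\text{const}+o(1),
\]
which is exactly the paper's $\frac{k+(k-s-1)}{2}\log(2\pi e N)$ scaling with the stated constants. (The paper reaches the same thing by the equivalent elementary step $H(Y_{>s}\mid\sum_{j>s}n_j)\leq H(Y_{>s},n_{>s}\mid\sum_{j>s}n_j)$ and a second application of Theorem~\ref{th:multinomEntropy}.) No CLT, no Schur complements are needed.

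Your proposed CLT route is therefore not a proof of the stated upper bound but an attempt at something much stronger: if the conditional entropy really satisfied $H(\bm n\mid\bm Y_N)\geq \frac{k-s-1}{2}\log(2\pi e N)+O(1)$, substituting back would force $H(T_N)\leq \frac{k}{2}\log(2\pi e N)+O(1)$, i.e.\ that the \emph{lower} bound is tight to leading order. The theorem does not claim this, the paper explicitly leaves the true scaling between $\frac{k}{2}\log N$ and $\frac{2k-s-1}{2}\log N$ open, and the ``main obstacle'' you identify (turning a multivariate CLT into a uniform discrete conditional entropy asymptotic) is genuinely hard and unresolved here. So for the theorem as stated, drop the CLT plan and use $H(\bm n\mid\bm Y_N)\geq 0$.
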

\begin{proof}

Consider the following random variables
\begin{align}
        & n_N^{(j)}:=\sum_{i=1}^{N} \mathbf{1}_{\chi_{j}}(X_{i})  \qquad \text{with} \qquad \bm 1_{\chi_j}(x) = 
    \begin{cases}
        1 & x \in \chi_j \\
        0 & x \notin \chi_j 
    \end{cases}\\
        &Y_N^{(j)}:= \sum_{i=1}^{N} X_{i} \, \mathbf{1}_{\chi_{j}}(X_{i})
    \end{align}
    where $n_N^{(j)}$ counts the number of times the variables $X_i$ took values in each set $\chi_j$, and $Y_N^{(j)}$ is the sum of these values, with $T_N =\sumab{i=1}{N} X_i = \sumab{j=1}{k} Y_N^{(j)}$. Since the sets $\chi_j$ are incommensurable, we known that each value $T_N =t$ admits a unique decomposition $t = \suma{j} y_j$ as a sum of values of the variable $Y_N^{(j)}$. Hence, there is a bijection between the random variable $T_N$ and the vector of random variables $\bm Y_N = (Y_N^{(1)},\dots, Y_N^{(k)})$, implying
    \begin{equation}
        H(T_N) = H(\bm Y_N).
    \end{equation}

Recall that only the first $s\leq k$ sets $\chi_j$ contain a single element $x_j\neq 0$ (by definition~\ref{def: incom}). These sets are special, as for them the identity $Y_N^{(j)} = x_j n_N^{(j)}$ gives a bijection between $Y_N^{(j)} $ and $n_N^{(j)}$. Using the chain rule we can thus write
\begin{align}
    H(\bm Y_N) &= H(Y_N^{(1)},\dots,Y_N^{(s)}) + H(Y_N^{(s+1)},\dots,Y_N^{(k)}|Y_N^{(1)},\dots,Y_N^{(s)})\\
   &= H(n_N^{(1)},\dots,n_N^{(s)}) + H(Y_N^{(s+1)},\dots,Y_N^{(k)}|n_N^{(1)},\dots,n_N^{(s)}). \label{eq: chanin rule}
\end{align}

Here, the first term $H(n_N^{(1)},\dots,n_N^{(s)})$ can be readily computed. To do so we will distinguish two cases. For $s=k=|\chi|$, we have $n_N^{(1)}+\dots+n_N^{(s)}=N$ and the vector $(n_N^{(1)},\dots,n_N^{(s)})$ follows a multinomial distribution with $s=k=|\chi|$ outcomes of probabilities $p_1,\dots p_s$. By Theorem~\ref{th:multinomEntropy}
we find 
\begin{equation}
    H(\bm Y_N) = H(n_N^{(1)},\dots,n_N^{(s)}) = \frac{k-1}{2}\log\left(2\pi e N \sqrt[k-1]{p_{1}p_{2}..p_{k}}\right)+o(1) 
\end{equation}
proving the theorem for the special case. 
For $s<k$, $n_N^{(1)}+\dots+n_N^{(s)}\neq N$ in general and the vector of random variables $(n_N^{(1)},\dots,n_N^{(s)})$ follows the multinomial distribution with N trials and $s+1$ events of probabilities $p_1,\dots, p_s, 1- \sumab{j=1}{s} p_j$, where the last event corresponds to $X_{j}\notin \chi_{1},..\chi_{s}$ . Again, using Theorem~\ref{th:multinomEntropy} we find
\begin{align}
    H(n_N^{(1)},\dots,n_N^{(s)}) & = \frac{s}{2}\log\left(2\pi e N \big( p_{1}\dots p_s (1- \sum_{j=1}^s p_j)\big)^{1/s} \right)-o(1) \\
    &= \frac{s}{2}\log( 2 \pi e N)  + \frac{1}{2}\log\left( p_{1}\dots p_s (1- \sum_{j=1}^s p_j) \right)-o(1)
\end{align}

To bound the second term in Eq.~\eqref{eq: chanin rule} we note that the knowledge of all $n_N^{(1)},\dots,n_N^{(s)}$ does not reveal more information on the remaining variables $Y_N^{(j)}$ (with $j\geq s+1$) than the knowledge of their sum $\sumab{j=1}{s} n_N^{(j)}$, therefore
\begin{align}
H\left(Y_N^{(s+1)},\dots,Y_N^{(k)}|n_N^{(1)},\dots,n_N^{(s)}\right) &=
H\left(Y_N^{(s+1)},\dots,Y_N^{(k)}|\sum_{j=1}^s n_N^{(j)}\right)
\end{align}
Since the total number of trials is fixed, knowing $\sumab{j=1}{s} n_N^{(j)}$ is equivalent to knowing $\sumab{j=s+1}{k} n_N^{(j)}= N-\sumab{j=1}{s} n_N^{(j)}$. 

Next, we separately discuss the lower and upper-bounds on $H(Y_N^{(s+1)},\dots,Y_N^{(k)}|\sumab{j=s+1}{k} n_N^{(j)})$. For the lower bound, by data processing inequality we conclude that the conditional entropy could only be lower if we knew all the individual values of $n_N^{(j)}$, hence
 \begin{equation}
     H\left(Y_N^{(s+1)},\dots,Y_N^{(k)}|\sum_{j=s+1}^k n_N^{(j)}\right) \geq  H\left(Y_N^{(s+1)},\dots,Y_N^{(k)}|n_N^{(s+1)},\dots, n_N^{(k)}\right). \label{eq: entropy lb}
 \end{equation}
For the upper bound, we essentially use the processing inequality with the chain rule again to get
\begin{align}
     H\left(Y_N^{(s+1)},\dots,Y_N^{(k)}|\sum_{j=s+1}^k n_N^{(j)}\right) &\leq  H\left(Y_N^{(s+1)},\dots,Y_N^{(k)},n_N^{(s+1)},\dots, n_N^{(k)}|\sum_{j=s+1}^k n_N^{(j)}\right)\label{eq: entropy ub} \\
     &=   H\left(Y_N^{(s+1)},\dots,Y_N^{(k)}|n_N^{(s+1)},\dots, n_N^{(k)}\right) + H\left(n_N^{(s+1)},\dots, n_N^{(k)}|\sum_{j=s+1}^k n_N^{(j)}\right) \label{eq: entropy ub last}
\end{align}
 Remark that both bounds~(\ref{eq: entropy lb},\ref{eq: entropy ub}) are tight for $s=k-1$ where $n_N^{(k)}=\sumab{j=s+1}{k} n_N^{(j)}$.

 Let us first compute the last term in the upper bound~\eqref{eq: entropy ub last}. To do so introduce $q=\sumab{j=s+1}{k} p_j$, and note that for $n=\sumab{j=s+1}{k} n_N^{(j)}$ the numbers $\left(n_N^{(s+1)},\dots, n_N^{(k)}\right)$ follow a multinomial distribution with $k-s$ outcomes of probability $\frac{p_j}{q}$. Therefore, by  Theorem~\ref{th:multinomEntropy} we get
\begin{align}
    H\left(n_N^{(s+1)},\dots, n_N^{(k)}|\sum_{j=s+1}^k n_N^{(j)}\right) &= \sum_{n=0}^N \binom{N}{n}\, q^n (1-q)^{N-n}\, H\left(n_N^{(s+1)},\dots, n_N^{(k)}|\sum_{j=s+1}^k n_N^{(j)}=n\right)\\
     &= \sum_{n=0}^N \binom{N}{n}\, q^n (1-q)^{N-n}
     \frac{k-s-1}{2}\log\left(2\pi e n \big(\frac{p_{s+1}}{q}\dots \frac{p_k}{q} \big)^{1/(k-s-1)} \right)+o(1)\\
     & = \frac{k-s-1}{2}\log\left(2\pi e  N\right) + \frac{1-\delta_{1,k-s-1}}{2}\log\left(p_{s+1\dots p_k}\right) + o(1),
\end{align}
where we used the fact that for the binomial distribution $\mathds{E}[\log(n)]= \log (q N)+o(1)$ in the assymptotic limit.
 
Next, compute the expression on the rhs of \eqref{eq: entropy lb}, which also appears in the upper bound. In fact,  when the total numbers of time $n_N^{(j)}$ the outcomes of $X_i$ fall into each set $\chi_j$ are known, the variables $Y_N^{(j)}$ become independent. Therefore,
\begin{equation}
   H(Y_N^{(s+1)},\dots,Y_N^{(k)}|n_N^{(s+1)},\dots, n_N^{(k)}) = \sum_{j=s+1}^k H(Y_N^{(j)}|n_N^{(j)}),
\end{equation}
and it remains to bound the individual entropies. To do so note that each $n_N^{(j)}$ follows a binomial distribution so 
\begin{align}
    H(Y_N^{(j)}|n_N^{(j)}) &= \sum_{j=0}^N \binom{N}{n}\, p_j^n (1-p_j)^{N-n} H(Y_N^{(j)}|n_N^{(j)}=n) \\
    &= \sum_{j=0}^N \binom{N}{n}\, p_j^n (1-p_j)^{N-n} H\left(\sum_{i=1}^n \tilde X_i^{(j)}\right),
\end{align}
were $\tilde X_i^{(j)} \in \chi_j$ with ${\rm Pr}(\tilde X_i^{(j)}= x) ={\rm Pr}( X_i= x|X_i\in \chi_j)=\frac{{\rm Pr}( X_i= x)}{p_j}$. Since each $\chi_j$ is a lattice by assumptions, using Theorem~\ref{Th:LatticeEntropy}
    we find
\begin{align}
    H(Y_N^{(j)}|n_N^{(j)}) &= \frac{1}{2}\log(N p_j)+\frac{1}{2}\log(2\pi e \frac{{\rm Var}(\tilde X^{(j)})}{h(\chi_j)^{2}})-o(1)\\
    & = \frac{1}{2}\log(2\pi e N)+\frac{1}{2}\log( p_j \frac{{\rm Var}(\tilde X^{(j)})}{h(\chi_j)^{2}})-o(1),
\end{align}

Combining all the terms together we thus find
\begin{align}
    H(T_N) \geq \frac{k}{2} \log(2\pi e N)+ \frac{1}{2}\log\left( p_{1}\dots p_s \left(1- \sum_{j=1}^s p_j \right) \right) +\sum_{j=s+1}^k\frac{1}{2}\log( p_j \frac{{\rm Var}(\tilde X^{(j)})}{h(\chi_j)^{2}})- o(1),
\end{align}
which is tight for $s=k-1$, and for $s<k-1$
\begin{align}
    H(T_N) \leq \frac{k+(k-s-1)}{2} \log(2\pi e N)+ \frac{1}{2}\log\left( p_{1}\dots p_s \left(1- \sum_{j=1}^s p_j \right) \right) +\sum_{j=s+1}^k\frac{1}{2}\log( p_j^2 \frac{{\rm Var}(\tilde X^{(j)})}{h(\chi_j)^{2}})- o(1).
\end{align}
\end{proof}

By the lemma~\ref{lem: incomm properties} when the partition of $\chi$ in sets with single elements $\{\chi_j=\{x_j\}\}_{j=1,\dots d}$ is incommensurable,  so is the the partition obtained by merging two sets into one, e.g. $\chi_j=\{x_j\}$ for $j=1,\dots d-2$ and $\chi_{d-1}=\{x_{d-1},x_d\}$. One can verify that using the Theorem~\ref{Th:EntropyForIncomm} for these two partitions gives exactly the same bound for the entropy. Hence, in principle, one never needs to invoke the theorem for the case $s=k$.

\section{Incommensurability rank and the main result}

Upon further inspection one realises that the random variables for which $\chi$ admits an incommensurable partition in lattices are very special, hence the applicability of Theorem~\ref{Th:EntropyForIncomm} is quite limited. To obtain a bound applicable to all discrete random variables, we now introduce a notion which only characterizes the incommensurability of a subset of $\chi$.

\begin{defn}\label{def: k can}
    Consider a discrete set of  real numbers $\chi$. A collection of disjoint sets $\{\chi_{1},..\chi_{k}\}$ is called a \textbf{k-canonical prepartition} (or canonical prepartition) of $\chi$ if it satisfies the following properties: 
\begin{enumerate}
\item $\chi_j \subset\chi$ 
 \item $\chi_{j}$ is a non-empty lattice for $j=1,..,k$ (definition \ref{defn:lattice}).
 \item $\{\chi_{j}\}_{j=1,..,k} $ is incommensurable (definition \ref{def: incom}).
\end{enumerate}
If $|\chi_j| =1$ for all $j=1,\dots,k$ we say that the canonical prepartition is {\bf degenerate}. 
\end{defn}

Note that when $\bigcup_{j=1}^k \chi_j = \chi$, the collection $\{\chi_1\dots,\chi_k\}$ is an incommensurable partition of $\chi$ in lattices, recovering the setting of Theorem~\ref{Th:EntropyForIncomm}. 

For a general random variable characterizing all the canonical prepartitions is non-trivial (see the examples in the App.~\ref{IncommensurabilityExamples}). Nevertheless, the following general result is enlightening. To state it we recall the notion of the \textit{rational span} of a set $\chi\subset \mathbb{R}$, given by
\begin{equation}
{\rm span}_{\mathbb{Q}}(\chi):= \left \{ \underset{x \in \chi}{\sum} q_{x} x : q_{x} \in \mathbb{Q}\right\}.
\end{equation}
That is the set containing all linear combination of elements of $\chi$ with {\it rational} coefficients.

\begin{lem}\label{lem: part span}
 Any finite set $\chi=\{x_1,\dots,x_d\}$ admits a $q$-\textit{canonical prepartition}, where $q:={\rm dim}_{\mathbb{Q}} [{\rm span}_{\mathbb{Q}}(\chi)]$ is the dimension of the rational span of $\chi$ seen as  vector space over the field of rational numbers $\mathbb{Q}$. 
\end{lem}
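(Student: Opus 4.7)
The plan is to exhibit a degenerate canonical prepartition (in the sense of Definition~\ref{def: k can}) built from $q$ singletons, one for each element of a $\mathbb{Q}$-basis of ${\rm span}_{\mathbb{Q}}(\chi)$ that I will extract from $\chi$ itself. Since $\chi=\{x_1,\dots,x_d\}$ is finite and spans ${\rm span}_{\mathbb{Q}}(\chi)$ as a rational vector space by definition, a standard result in linear algebra (any finite spanning family of a vector space contains a basis) guarantees the existence of a subcollection $\{x_{i_1},\dots,x_{i_q}\}\subseteq \chi$ forming a $\mathbb{Q}$-basis of ${\rm span}_{\mathbb{Q}}(\chi)$. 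In particular these $q$ elements are $\mathbb{Q}$-linearly independent and each is nonzero.

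I would then set $\chi_j:=\{x_{i_j}\}$ for $j=1,\dots,q$ and verify the three conditions of Definition~\ref{def: k can} in order. The inclusion $\chi_j\subseteq \chi$ is immediate, and each $\chi_j$ is a nonempty singleton consisting of a nonzero real number, which is trivially a lattice in the sense of Definition~\ref{defn:lattice} (one may take $a=x_{i_j}$ and any $h\in\mathbb{R}$). Hence the only substantive task is to check incommensurability.

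For the incommensurability check, observe that since $\chi_j$ is a singleton one has $\mathcal{S}_m(\chi_j)=\{m\, x_{i_j}\}$, again a singleton. Consequently, for any $N\in\mathbb{N}$ and any $\bm m,\bm n\in\mathbb{N}^{q}_{N}$, the only admissible vectors in Definition~\ref{def: incom} are $\vec y=(m_1 x_{i_1},\dots,m_q x_{i_q})$ and $\vec y'=(n_1 x_{i_1},\dots,n_q x_{i_q})$, and the hypothesis $\sum_j y_j=\sum_j y_j'$ rewrites as $\sum_j (m_j-n_j)\, x_{i_j}=0$. The $\mathbb{Q}$-linear independence of the chosen basis then forces $m_j=n_j$ and hence $y_j=y_j'$ for every $j$. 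I do not foresee any real obstacle: the main conceptual point is simply that picking singletons from a rational basis reduces both the lattice condition and the incommensurability condition to elementary facts, so the rank $q$ is already attained by the simplest possible (degenerate) prepartition.
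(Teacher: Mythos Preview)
Your proof is correct: the singletons $\chi_j=\{x_{i_j}\}$ built from a rational basis of ${\rm span}_{\mathbb Q}(\chi)$ satisfy all three conditions of Definition~\ref{def: k can}, and your incommensurability argument via $\mathbb Q$-linear independence is exactly right.

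The paper follows the same underlying idea (pick a $\mathbb Q$-basis inside $\chi$) but defines the prepartition more generously: instead of singletons it sets $\chi_j:={\rm span}_{\mathbb Q}(x_j)\cap(\chi\setminus\{0\})$, i.e.\ each $\chi_j$ collects \emph{all} nonzero elements of $\chi$ that are rational multiples of the $j$-th basis vector. The incommensurability check is then the same linear-independence argument applied to the rational coefficients of $y_j-y_j'$, and the lattice property uses that a finite subset of $\mathbb Q\cdot x_j$ has a common denominator. The payoff of the paper's version is that the resulting prepartition can be non-degenerate (some $|\chi_j|\geq 2$), which plugs directly into the $s<k$ case of Theorem~\ref{Th:General-iid-entropy} with the stronger $\frac{k}{2}\log N$ scaling, whereas your degenerate prepartition lands in the $s=k$ case with $\frac{k-1}{2}\log N$. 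For the bare statement of the lemma this does not matter, and indeed the paper itself observes that a degenerate $k$-prepartition can always be collapsed to a non-degenerate $(k-1)$-prepartition yielding the same bound; but it is worth being aware that the paper's construction already delivers the ``useful'' object in one step.
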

\begin{proof}
First, reorder the values $(x_1,\dots,x_d)$ such that the first $q$ values $x_{1},x_{2},\dots,x_{q}$ are linearly independent and form a basis of the vector space ${\rm span}_{\mathbb{Q}}(\chi)$ (which implies that non of these elements is zero). Then, define the following sets 
\begin{equation}
    \chi_{j}:={\rm span}_{\mathbb{Q}}(x_{j})\cap (\chi\setminus\{0\}) \quad  \text{for} \quad j=1,\dots,q \qquad \text{and}  \qquad \overline \chi=\chi\setminus \underset{j=1}{\overset{q}{\bigcup}}\chi_j.
\end{equation} 
Linear independence of $x_{1},x_{2},\dots,x_{q}$ ensures that that $\{\chi_1,\dots, \chi_q\}$ are disjoint and incommensurable (implying 2), while $|\chi|=d$ with $x_j \in \chi_{j} \subset {\rm span}_{\mathbb{Q}}(x_{j})$ ensures that they are non-empty lattices with $\chi_j \neq \{0\}$ (implying 1 and 3). Hence $\{\chi_1,\dots, \chi_q,\overline \chi\}$ is a $q$-canonical prepartition of $\chi$.
\end{proof}

We are now ready to state the main result of the paper. But first we give a technical lemma used in the proof.

\begin{lem}\label{lemma: remove}
    Let $X_1,\dots, X_N$ be a collection of iid random variable taking values in some set $\chi$. For any bipartition $\chi =\chi_g \cup \chi_g^c$ in two disjoint sets $\chi_g \cap \chi_g^c= \varnothing$, with ${\rm Pr}(X_i \in \chi_g) = q$, define the random variables
    \begin{equation}\label{eq: Y lem}
        Y_i  \qquad \text{such that}\qquad {\rm Pr}(Y_i = x) = {\rm Pr}(X_i =x|X_i\in \chi_g)  = \frac{{\rm Pr}(X_i =x)}{q} 
    \end{equation}
   than the following bound holds
    \begin{equation}\label{eq: lemma 8}
        H\left(\sum_{i=1}^N X_i \right) \geq \sum_{n=0}^N  \binom{N}{n} q^n(1-q)^{N-n} \, H\left(\sum_{i=1}^n Y_i \right).
    \end{equation}
\end{lem}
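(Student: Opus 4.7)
The plan is to introduce the counting random variable $n_N := \sum_{i=1}^N \mathbf{1}_{\chi_g}(X_i)$, which follows a binomial distribution $\mathrm{Bin}(N,q)$, and decompose the sum as $T_N = T_N^g + T_N^c$ where $T_N^g := \sum_i X_i \mathbf{1}_{\chi_g}(X_i)$ collects the contributions from $\chi_g$ and $T_N^c := \sum_i X_i \mathbf{1}_{\chi_g^c}(X_i)$ those from $\chi_g^c$. The point of introducing $n_N$ is the following structural fact: conditioning on $n_N = n$, the random variables $T_N^g$ and $T_N^c$ are independent, because the indices falling in $\chi_g$ vs.\ $\chi_g^c$ are determined by $n_N$ (up to a uniform random permutation that has no effect on the sums), and within each group the $X_i$ are iid with the appropriate conditional distribution. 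Moreover, conditional on $n_N = n$, the random variable $T_N^g$ has exactly the distribution of $\sum_{i=1}^n Y_i$ by the definition of $Y_i$ in \eqref{eq: Y lem}.

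With this in hand the proof reduces to a short chain of standard entropy inequalities. First, conditioning only decreases entropy, so
\begin{equation}
H(T_N) \geq H(T_N \mid T_N^c, n_N).
\end{equation}
Second, since $T_N = T_N^g + T_N^c$, once $T_N^c$ is known the variables $T_N$ and $T_N^g$ determine each other, so $H(T_N \mid T_N^c, n_N) = H(T_N^g \mid T_N^c, n_N)$. Third, by the conditional independence noted above,
\begin{equation}
H(T_N^g \mid T_N^c, n_N) = H(T_N^g \mid n_N).
\end{equation}

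It then remains to expand $H(T_N^g \mid n_N)$ by averaging over the binomial law of $n_N$ and using the identification of the conditional law of $T_N^g$ with that of $\sum_{i=1}^n Y_i$:
\begin{equation}
H(T_N^g \mid n_N) = \sum_{n=0}^N \binom{N}{n} q^n (1-q)^{N-n}\, H\!\left(\sum_{i=1}^n Y_i\right),
\end{equation}
which yields the claimed bound \eqref{eq: lemma 8}. I do not foresee a genuine obstacle here; the only subtle point, which deserves a short explicit argument, is the conditional independence of $T_N^g$ and $T_N^c$ given $n_N$, together with the fact that dropping the conditioning on $T_N^c$ (after using it to convert $T_N$ into $T_N^g$) is valid precisely because of this independence. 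Everything else is an application of the data-processing/conditioning inequality and of the definition of conditional entropy as an average.
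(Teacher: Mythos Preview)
Your proof is correct and follows essentially the same strategy as the paper: lower-bound $H(T_N)$ by conditioning on the ``bad'' part of the sum, use translation invariance of entropy to reduce to the ``good'' part, and identify the latter with $\sum_{i=1}^n Y_i$ weighted by the binomial law of the count. The only cosmetic difference is that the paper conditions more aggressively on the full sequences $\bm C=(\mathbf{1}_{\chi_g}(X_i))_i$ and $\bm Z$ rather than on their aggregates $n_N$ and $T_N^c$, which makes the bad contribution a literal constant and bypasses the conditional-independence step you single out; both routes land on the same expression.
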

\begin{proof}
    Each random variable $X_i$ can be decomposed as  $X_i= C_i Y_i   + (1-C_i) Z_i$ where $Y_i,Z_i,C_{i}$ 
    are independent, $C_i\in\{0,1\}$ is Bernoulli with ${\rm Pr}(C_i= 1) = q $, $Y_i$ is given in Eq.~\eqref{eq: Y lem}  and
    \begin{equation}\label{eq: Z lem}
        Z_i  \qquad \text{such that}\qquad {\rm Pr}(Z_i = x) = {\rm Pr}(X_i =x| X_i \in \chi_g^c)  = \frac{ {\rm Pr}(X_i =x)}{1-q}.
    \end{equation}
    This decomposition correspondences to first sampling a biased coin $C_i$ deciding if $X_i$ is in the set $\chi_g$ or $\chi_g^c$, and then sampling the value in these sets from the conditional distributions. We will use the notation $\bm C =C_1\dots C_N$, $\bm Z =Z_1\dots Z_N$  and $\lVert \bm c\rVert:=\sumab{i=1}{N}c_{i}$ counting the number of ones in a bitstrings (the value of $\bm C$). For these variables we have 
    \begin{align}
        H\left(\sum_{i=1}^N X_i \right) &= H\left(\sum_{i=1}^N (C_i Y_i  + (1-C_i) Z_i)\right) \\
        &\geq H\left(\sum_{i=1}^N (C_i Y_i  + (1-C_i) Z_i)\big | \bm C, \bm Z\right)  \\
        &=\sum_{\bm c, \bm z} Pr(\bm C=\bm c) Pr(\bm Z=\bm z) \,H \left(\sum_{i=1}^N c_i Y_i   + \sum_{i=1}^N  (1-c_i) z_i \right) \\
        &= \sum_{\bm c} Pr(\bm C=\bm c) \,\,H \left(\sum_{i=1}^N c_i Y_i \right) = \sum_{\bm c} Pr(\bm C=\bm c) \,\,H \left(\sum_{i=1}^{\lVert\bm c \rVert} Y_i \right) \\
        &= \sum_{n=0}^N  \binom{N}{n} q^n(1-q)^{N-n} \, H\left(\sum_{i=1}^n Y_i \right) .
    \end{align}
\end{proof}

\begin{thm}
 \label{Th:General-iid-entropy}
Let $X_1,\dots, X_N$ be a collection of iid random variable taking values in $\chi$, a set of real number that admits a k-canonical prepartition $\{\chi_{1},..\chi_{k}\}$ (definition \ref{def: k can}) with $p_j = {\rm Pr}(X_i \in \chi_j)$ for $j= 1,\dots,k$, and  $q =\sumab{j=1}{k} p_j$. In addition, let $\chi_{1},..\chi_{s\leq k}$ be the only subsets with one element, then the following bounds hold
 \begin{align}\label{eq: lambda1}
      s<k: \quad    H\left(\sum_{i=1}^N X_i \right) &\geq 
          \frac{k}{2} \log(2\pi e  N \lambda_1)-o(1) \qquad \quad \,\, \, \, \lambda_{1}:=\left( p_1\dots p_s \left(1- \sum_{j=1}^s \frac{p_j}{q}\right)\prod_{j=s+1}^k p_j \frac{{\rm Var}(\tilde X^{(j)})}{h(\chi_j)^{2}}\right )^{1/k} \\
    s = k: \quad   H\left(\sum_{i=1}^N X_i \right) &\geq  \frac{k-1}{2} \log(2\pi e  N \lambda_{2})-o(1)  \qquad \lambda_{2}:= \left (\frac{p_{1}\dots p_{k} }{q^{k}}\right )^{\frac{1}{k-1}}
 \end{align}

Where $\tilde X^{(j)}$ is the random variable such that ${\rm Pr}(\tilde X^{(j)}=x) ={\rm Pr}(X_{\rc{1}}=x| X_{\rc{1}} \in \chi_j)$ and $h(\chi_j)$ is the maximal span of the lattice $\chi_j$.
\end{thm}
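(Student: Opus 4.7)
The plan is to reduce the general claim to the setting of Theorem~\ref{Th:EntropyForIncomm} via Lemma~\ref{lemma: remove}, and then average the resulting estimate over the binomial distribution governing the number of samples that fall in the union $\chi_g := \bigcup_{j=1}^k \chi_j$. Since ${\rm Pr}(X_i \in \chi_g) = q$, Lemma~\ref{lemma: remove} applied with the bipartition $\chi = \chi_g \cup (\chi \setminus \chi_g)$ gives
\begin{equation*}
H\!\left(\sum_{i=1}^N X_i\right) \;\geq\; \sum_{n=0}^N \binom{N}{n}\, q^n (1-q)^{N-n}\; H\!\left(\sum_{i=1}^n Y_i\right),
\end{equation*}
where the iid $Y_i$ have the conditional distribution ${\rm Pr}(Y_i = x) = {\rm Pr}(X_i = x)/q$ on $\chi_g$, so that ${\rm Pr}(Y_i \in \chi_j) = \tilde p_j := p_j/q$.

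The collection $\{\chi_1, \ldots, \chi_k\}$ is now a full $k$-incommensurable partition of the support of $Y_i$ into lattices, with $\chi_1, \ldots, \chi_s$ the only singletons, so Theorem~\ref{Th:EntropyForIncomm} applies to $\sum_{i=1}^n Y_i$. For $s < k$ it yields the lower bound~\eqref{eq: Ther6 gen lb} with each $p_j$ replaced by $\tilde p_j = p_j/q$; noting that the conditional random variables $\tilde Y^{(j)}$ appearing there coincide in law with the $\tilde X^{(j)}$ of the present statement, and collecting the $k$ factors of $q^{-1}$ introduced by the $\tilde p_j$'s, the bound rewrites as $\tfrac{k}{2}\log(2\pi e\, n\, \lambda_1/q) + o_n(1)$. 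For $s = k$ the exact asymptotic of Theorem~\ref{Th:EntropyForIncomm} produces $\tfrac{k-1}{2}\log(2\pi e\, n\, \lambda_2) + o_n(1)$ directly, since $\lambda_2 = (\tilde p_1 \cdots \tilde p_k)^{1/(k-1)}$ matches the stated definition.

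Finally, perform the binomial average. Concentration of $n \sim \mathrm{Bin}(N,q)$ around its mean $qN$ together with a Taylor expansion gives $\mathbb{E}[\log n] = \log(qN) + o(1)$ as $N \to \infty$. Substituting into the $s<k$ estimate, the $q^{-1}$ inside $\log(\cdot/q)$ cancels exactly against the $q$ produced by $\log(qN) = \log N + \log q$, yielding $\tfrac{k}{2}\log(2\pi e\, N \lambda_1) - o(1)$; an analogous rearrangement reduces the $s=k$ estimate to the stated $\lambda_2$-bound. The principal technical obstacles are (i) the algebraic cancellation of $q$-factors just outlined, which must be carried out carefully because both the conditioning and the averaging introduce competing powers of $q$, and (ii) the uniform treatment of the $o_n(1)$ remainder from Theorem~\ref{Th:EntropyForIncomm} under the binomial average; the latter is controlled by the concentration of $n$ on the regime $n \asymp qN \to \infty$ where the estimate is valid, together with a crude tail bound such as $H(\sum_{i=1}^n Y_i) \leq n \log|\chi_g|$ (or an appropriate truncation in the infinite case) to show that the contribution of small $n$ is negligible.
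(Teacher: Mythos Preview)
Your proposal is correct and follows exactly the paper's own strategy: apply Lemma~\ref{lemma: remove} with $\chi_g=\bigcup_{j=1}^k\chi_j$ to reduce to the conditional variables $Y_i$, invoke Theorem~\ref{Th:EntropyForIncomm} on $\sum_{i=1}^n Y_i$ with probabilities $p_j/q$, and then average using $\mathbb{E}[\log n]=\log(qN)+o(1)$ to recover the factor $N$ and cancel the powers of $q$. Your explicit discussion of controlling the $o_n(1)$ remainder under the binomial average is in fact more careful than the paper's proof, which simply asserts the asymptotic without further comment.
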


\begin{proof}
    The proof simply consists of combining the Theorem~\ref{Th:EntropyForIncomm} with the Lemma~\ref{lemma: remove}. First, with the help of Lemma~\ref{lemma: remove} we get rid of all the outcomes in $\overline{\chi}:= \chi \setminus(\bigcup_{j=1}^k \chi_j)$ to obtain
    \begin{equation}
        H\left(\sum_{i=1}^N X_i \right) \geq \sum_{n=0}^N \binom{N}{n} q^n (1-q)^{N-n}  H\left(\sum_{i=1}^n Y_i \right),
    \end{equation}
where $Y_i$ are the conditional random variables 
\begin{equation}
Y_i \in \chi \setminus 
\overline{\chi} \qquad\quad {\rm Pr}( Y_i =x) = {\rm Pr}( X_i =x| X_i \notin \overline{\chi})  = \frac{{\rm Pr}( X_i =x) }{q}   \quad \text{for} \quad x \notin \overline{\chi}.
\end{equation}
Then we apply the Theorem~\ref{Th:EntropyForIncomm} to bound each term $H\left(\sumab{i=1}{n} Y_i \right)$.

For $s<k$ we find (using the fact that $\chi_j$ are lattices)
\begin{equation}
    H\left(\sum_{i=1}^n Y_i \right) \geq 
    \frac{k}{2} \log(2\pi e \, n)+ \frac{1}{2}\log\left( \frac{p_{1}}{q}\dots \frac{p_{s}}{q} \left(1-\sum_{j=1}^s \frac{p_j}{q} \right) \right) +\sum_{j=s+1}^k\frac{1}{2}\log( \frac{p_j}{q} \frac{{\rm Var}(\tilde X^{(j)})}{h(\chi_j)^{2}})- o(1),
\end{equation}
since $\tilde X^{(j)}=\tilde Y^{(j)}$. In the large $N$ limit we know that $\sumab{n=0}{N}\binom{N}{n} q^n (1-q)^{N-n} \log(n)=\log(qN)+o(1)$, leading to 
\begin{align}
     H\left(\sum_{i=1}^N X_i \right)  &\geq 
    \frac{k}{2} \log(2\pi e \, q N )+ \frac{1}{2}\log\left( \frac{p_{1}}{q}\dots \frac{p_{s}}{q} \left(1-\sum_{j=1}^s \frac{p_j}{q} \right) \right) +\sum_{j=s+1}^k\frac{1}{2}\log( \frac{p_j}{q} \frac{{\rm Var}(\tilde X^{(j)})}{h(\chi_j)^{2}})- o(1)\\
    & = \frac{k}{2} \log(2\pi e \, N) + \frac{1}{2}\log\left( p_1\dots p_s \prod_{j=s+1}^k p_j \frac{{\rm Var}(\tilde X^{(j)})}{h(\chi_j)^{2}}\right) + \frac{1}{2}\log\left(1-\sum_{j=1}^s \frac{p_j}{q} \right) -o(1).
\end{align}

For $s=k$ we get 
\begin{equation}
    H\left(\sum_{i=1}^n Y_i \right) \geq 
    \frac{k}{2} \log(2\pi e \, n)+ \frac{1}{2}\log\left( \frac{p_{1}}{q}\dots \frac{p_{k}}{q}\right) - o(1),
\end{equation}
 where once again we use that in the large $N$ limit $\mathds{E}[\log(n)]= \log (q N)+o(1)$, completing the proof.
 \end{proof}

Remark that if the sets $\chi_{1},..\chi_{k}$ all have a single element (i.e. the prepartition is degenerate), one can merge any two of them to obtain a $(k-1)$-canonical prepartition (non-degenerate). One can verify that with the above result, the two prepartitions lead to identical bounds. This observation, and the fact that the bound on entropy we obtained is linear in $k$ motivates the following definition.

\begin{defn} \label{Def:r-Set}
    Let $\chi$  be a discreet set of real numbers with $|\chi|\geq 2$.  We define its {\bf incommensurability rank} $r(\chi)$ as the maximal integer $1\leq r(\chi)\leq |\chi|-1$ such that $\chi$ admits a non-degenerate $r$-canonical prepartition.  For a singleton set $|\chi|=1$  we set $r(\chi)=0$.
\end{defn}

The lower bound $r(\chi)\geq 1$ follows from the observation that $\chi_1=\{x_1,x_2\}$ is a non-degenerate $1$-canonical prepartition for each $\chi$. While the upper-bound $r(\chi) \leq |\chi|-1$ trivially follows from the observation that a set can not be partitioned in more that $|\chi|-1$ disjoint subsets which are not all singleton.

\begin{defn} \label{Def:r-RV}
   Let $X$ be a discrete non-deterministic random variable $X$ taking values in $\chi\subset \mathds{R}$, we call its {\bf incommensurability rank} $r(X)$ the maximal integer $1\leq r(X)\leq |\chi|-1$ such that a shifted set $\chi' = \{x+a:x\in \chi\}$ with $a\in \mathds{R}$ admits a non-degenerate $r$-canonical prepartition.  For a deterministic random variables we set $r(X)=0$.
\end{defn}

Note that a priori, the incommensurability rank of a set $\chi$ can be increased by a shift of the set, see example around Eq.~\eqref{eq: example non invariant}. This is why we have split the definition of the rank for a set and for a random variable. Nevertheless, the Lemma~\ref{Lem:Conicidence} in the appendix shows that they always coincide, i.e. the incommensurability rank of a set $\chi$ can not be increased by a shift. Additionally, Lemma~\ref{lem: rank lattice} demonstrate that lattice random variables have unit rank $r(X)=1$, and, conversely, any random variable with finite domain $|\chi|<\infty$ and unit rank is lattice.\\

We can now state the equivalent of Theorem~\ref{Th:General-iid-entropy} in terms of the incommensurability rank of $X$, that is for the canonical prepartition with the optimal scaling of the entropy in the large $N$ limit.

\begin{cor} \label{cor:GeneralIIDEntropy}
Let $X_1,\dots, X_N$ be a collection of iid discrete random variable taking values in a set $\chi$ of real numbers. Then the following bounds holds
 \begin{align}
           H\left(\sum_{i=1}^N X_i \right) \geq \frac{r(\chi)}{2} \log( 2\pi e \lambda_1 N )-o(1)
    \end{align}
where $r(\chi)$ is the incommensurability rank of $X$ and $\lambda_1$ is given in Eq.~\eqref{eq: lambda1} for any non-degenerate $r(\chi)$-canonical partition of $\chi$.
\end{cor}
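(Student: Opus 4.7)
The plan is to obtain Corollary~\ref{cor:GeneralIIDEntropy} as a direct specialization of Theorem~\ref{Th:General-iid-entropy}. The only work to do is to choose the right canonical prepartition to plug into that theorem, and then verify which of its two cases applies.

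First, I invoke Definition~\ref{Def:r-Set}: by definition, $r(\chi)$ is the largest integer for which $\chi$ admits a non-degenerate $r$-canonical prepartition, so such a prepartition $\{\chi_1,\dots,\chi_k\}$ with $k=r(\chi)$ does exist. (For $|\chi|=1$ the random variable is deterministic, $T_N$ is deterministic, and the statement is trivial; for $|\chi|\geq 2$ we have $r(\chi)\geq 1$, as already remarked after Definition~\ref{Def:r-Set}.) I then feed this prepartition to Theorem~\ref{Th:General-iid-entropy}.

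Next, I need to decide which of the two bounds in Theorem~\ref{Th:General-iid-entropy} applies. Let $s$ be the number of singleton subsets among $\{\chi_1,\dots,\chi_k\}$, in the notation of the theorem. Non-degeneracy of the prepartition means that at least one $\chi_j$ has more than one element, hence $s\leq k-1<k$. We are therefore in the first case of Theorem~\ref{Th:General-iid-entropy}, which reads
\begin{equation*}
H\left(\sum_{i=1}^N X_i\right) \;\geq\; \frac{k}{2}\log\bigl(2\pi e\, N\,\lambda_1\bigr) - o(1),
\end{equation*}
with $\lambda_1$ as in Eq.~\eqref{eq: lambda1}. Substituting $k=r(\chi)$ gives exactly the inequality claimed in the corollary.

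No step is genuinely difficult, since everything needed has already been carried out in the proofs of Theorem~\ref{Th:EntropyForIncomm}, Lemma~\ref{lemma: remove}, and Theorem~\ref{Th:General-iid-entropy}; the corollary is really just a convenient repackaging. The only conceptual content is that, among all canonical prepartitions one could feed into Theorem~\ref{Th:General-iid-entropy}, those with the maximal number of blocks (i.e.\ saturating the rank) yield the best asymptotic prefactor $\tfrac{r(\chi)}{2}\log N$, which is exactly what the definition of the incommensurability rank was crafted to capture. If one prefers to phrase the bound in terms of $r(X)$ rather than $r(\chi)$, one additionally invokes the identity $r(X)=r(\chi)$ established by Lemma~\ref{Lem:Conicidence}, but nothing else in the argument changes.
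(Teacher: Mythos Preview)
Your proposal is correct and matches the paper's intended derivation: the corollary is stated immediately after Theorem~\ref{Th:General-iid-entropy} and Definition~\ref{Def:r-Set} without a separate proof, precisely because it is obtained by applying Theorem~\ref{Th:General-iid-entropy} to a non-degenerate $r(\chi)$-canonical prepartition (so that $s<k$) and reading off the first bound with $k=r(\chi)$. The only minor remark is that since Theorem~\ref{Th:General-iid-entropy} applies to \emph{any} $k$-canonical prepartition, your argument automatically yields the bound for every non-degenerate $r(\chi)$-canonical prepartition, as the corollary's phrasing requires.
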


\section{conclusion}

While we do not posses explicit examples of random variables not saturating corollary~\ref{cor:GeneralIIDEntropy} 
in the scaling, we suspect that the lower bound is not tight in general. Instead, it would be interesting to establish an upper-bound in the general case, theorem \ref{Th:EntropyForIncomm} suggests that the entropy cannot scale faster than $H\leq \frac{2 r(\chi)-1}{2} \log(N) + {\it cst}$.\\

Finally, it is also worth mentioning that for a random variable $X$ admitting a $k$-canonical prepartition $\chi_g=\cup_{i=1}^k\chi_i \subset \chi$ with $|\chi_i|=1$ iff $i=1,\dots s$ and ${\rm Pr}(X_i \in \chi_g)=q$, the lower bound in theorem \ref{Th:General-iid-entropy} is obtained by combining the inequalities~(\ref{eq: lemma 8}),\eqref{eq: chanin rule} and \eqref{eq: entropy lb}. Those can be summarized as 
 \begin{align}
 H\left(\sum_{i=1}^N X_i \right) &\geq \sum_{n=0}^N \binom{N}{n} q^n (1-q)^{N-n}  \, H\left(\sum_{i=1}^n Y_i \right)\\
  H\left(\sum_{i=1}^n Y_i \right) &\geq H(n_n^{(1)},\dots,n_n^{(s)}) + H(Y_n^{(s+1)}|n_n^{(s+1)})+\dots+ H(Y_n^{(k)}|n_n^{(k)})
\end{align}
with the random variables $Y_i$ with ${\rm Pr}(Y_i = x) := {\rm Pr}(X_i =x|X_i\in \chi_g)$ and $n_n^{(j)} =\sumab{i=1}{n} \bm 1_{\chi_j}(Y_i)$. Both expressions are simple consequences of the data processing inequality for the conditional entropy (in addition to the incommensurability of the sets $\chi_i$), and hence directly apply to any entropy that has this property.\\

{\it Acknowledgements.---} We thank Bernardo Tarini and Pietro Gualdi for suggestions during the early stage of the project and L. Gavalakis with I. Kontoyiannis for encouraging comments on a draft of this manuscript. We acknowledge financial support from the Swiss National Science Foundation NCCR SwissMAP.

\appendix

\section{Examples of incommensurability partitions}  \label{IncommensurabilityExamples}

\subsection{Illustrative examples}
Let us illustrate the notion of incommensurability partitions with a pair of examples.
Consider the set $\chi = \bigcup_{j=1}^k \chi_j$ which is not necessarily finite. Let $\chi_j\subset \{ h_j n: n\in \mathbb{Z}\setminus\{0\}\}$ with $h_j>0$. If the real numbers $(h_1,\dots, h_k)$ are linearly independent over the field of rational numbers, i.e. $\sumab{j=1}{k} q_j h_j =0 \implies q_j=0$ for $q_j\in \mathbb{Q}$, then the partition $\{\chi_{j} \}_{j=1..k}$ is incommensurable. Indeed, linear independence guarantees that for integers $n_i$ the equation $t = n_1 h_1 +\dots n_k h_k$ has at most a unique solution.

Constructing partitions such that it's subsets are not linear independent (over $\mathbb{Q}$) is also straitforword, for example consider $\chi=\{1,\pi, 1+\pi\}$ into $\chi_{1}=\{1\},\chi_{2}=\{\pi \}$ and $\chi_{3}=\{1+\pi\}$. Indeed, the following system of equations for natural variables $n_1,n_2$ and $n_3$
\begin{equation}
    \left\{
\begin{array}{l}
n_{1}+n_{2}\pi + n_{3}(1+\pi) = t =k+k'\pi, \\
n_{1}+n_{2}+n_{3} = N.
\end{array}
\right. \iff  \left\{
\begin{array}{l}
n_{1}+n_{3} = k, \\
n_{2}+n_{3}=k', \\
n_{1}+n_{2}+n_{3} = N.
\end{array}
\right.
\end{equation}
has at most one solution. For this case, is crucial that in the definition of an incommensurable partition, the total number $N$ of sampled random variables is fixed.  

\subsection{An example of $r(\chi)>{\rm dim}_{\mathbb{Q}} [{\rm span}_{\mathbb{Q}}(\chi)]$}
Finally we make an example were the best $k$-canonical partition i.e. the one that yields the stronger inequality when applied to result \ref{Th:General-iid-entropy}, is not given by the identification $\chi_{j}={\rm span}_{\mathbb{Q}}(x_{j})\cap \chi$ suggested by lemma~\ref{lem: part span}.  Consider the set $\chi:=\{1, 2, \pi,1+\pi,\sqrt{2},2\sqrt{2}\}$ and the following two canonical partitions
\begin{enumerate}
    \item $\chi_{1}=\{1\},\;\chi_{2}=\{\pi\},\; \chi_{3}=\{\sqrt{2},2\sqrt{2}\}, \;\overline{\chi}=\{1+\pi\}$, is a non-degenerate $3$-canonical partition, obtained with the procedure indicated in lemma~\ref{lem: part span}. Accordingly ${\rm dim}_{\mathbb{Q}} [{\rm span}_{\mathbb{Q}}(\{1,\pi,1+\pi,\sqrt{2},2\sqrt{2}\})]=3$. By using theorem ~\ref{Th:General-iid-entropy} we would get an entropy lower bound scaling as $\frac{3}{2}\log{N}$.
    \item
    $\chi_{1}:=\{1\},\; \chi_{2}:= \{\pi\},\; \chi_{3}:= \{1+\pi\},\;\chi_{4}:= \{\sqrt{2},2\sqrt{2}\}$ is a non-degenerate $4$-canonical partition, yielding a lower bound on the entropy scaling as $2\log(N)$ by theorem ~\ref{Th:General-iid-entropy} or equivalently by Corollary \ref{cor:GeneralIIDEntropy}.
\end{enumerate}

 The partition 2. proves that $\chi$ has $r(\chi) \geq 4$, in fact, because the set has $5$ elements, we have $r(\chi) = 4$. Thus, the construction in Lemma~\ref{lem: part span} is here suboptimal. Furthermore, starting from this example, it is easy to construct sets for which
\[
r(\chi) - \dim_{\mathbb{Q}} [\operatorname{span}_{\mathbb{Q}}(\chi)]
\]
is arbitrarily large. Consider $2d+1$ real numbers $r, r_{1}, \ldots, r_{2d}$ that are linearly independent over $\mathbb{Q}$, and define the set
\[
\chi := \{r, 2r, r_{1}, r_{2}, r_{1}+r_{2}, r_{3}, r_{4}, r_{3}+r_{4}, \ldots, r_{2d-1}+r_{2d} \}.
\]
Then we have
\[
\dim_{\mathbb{Q}} [\operatorname{span}_{\mathbb{Q}}(\chi)] = \dim_{\mathbb{Q}} [\operatorname{span}_{\mathbb{Q}}(r,r_{1},..r_{2d})=2d+1,
\]
while the partition made by $\chi_{0} = \{r, 2r\}$, $\chi_{j} = \{r_{j}\}$ for $j \in \{1, \ldots, 2d\}$, and $\chi_{2d+k} := \{r_{2k-1} + r_{2k}\}$ for $k \in \{1, \ldots, d\}$ is non-degenerate and $(3d+1)$-incommensurable, proving that
\[
r(\chi) - \dim_{\mathbb{Q}} [\operatorname{span}_{\mathbb{Q}}(\chi)] = d.
\]

\subsection{Incommensurability is not invariant under uniform translations.}

Here, we justify why the maximization over shifts in definition \ref{Def:r-RV} was introduced. We show explicitly that the notion of incommensurable partition in not invariant under uniform translations. Consider two linearly independent real numbers $r_{1},r_{2}$ and $\chi=\{r_{1},2r_{1},r_{2},2r_{2}\}$ admits the incommensurable partition $\chi_1=\{r_{1},2r_{1}\}$ with $\chi_2=\{r_{2},2 r_{2}\}$, since $t= n\,r_{1} + m\, r_{2}$ determines the naturals $n$ and $m$ uniquely. In contrast, for the shifted set $\chi=\{r_{1}+\epsilon,2+\epsilon, r_{2}+\epsilon,2r_{2}+\epsilon\}$  the partition $\chi_1=\{r_{1}+\epsilon,2r_{1}+\epsilon\}$ with $\chi_2=\{r_{2}+\epsilon,2 r_{2} +\epsilon\}$ is no longer incommensurable. Indeed, for $N=3$ we can decompose $t=\sumab{i=1}{N}X_{i}$ in two different ways 
\begin{equation}\label{eq: example non invariant}
    t= 2r_{1}+2r_{2} +3 \epsilon = 2(r_{1}+\epsilon) +(2r_{2}+\epsilon) = (2r_{1}+\epsilon) +2(r_{2}+\epsilon)
\end{equation}
admits decompositions with different $\bm y = \binom{2(r_{1}+\epsilon)}{2r_{2}+\epsilon} \neq \bm y' =\binom{2r_{1}+\epsilon}{2(r_{2}+\epsilon)}$, where we used the notation introduced in \ref{def: incom}.

\section{Some properties of the incommensurability rank}

First, we list some properties of incommensurable partitions that follow immediately from the definition and will be used later:

\begin{lem}\label{lem: incomm properties}
Let $\{\chi_{j} \}_{j=1..k}$ be an incommensurable collection of disjoint sets of real numbers, then the following collections are also incommensurable
\begin{enumerate}
 \item Any collection where we remove one set $\{\chi_{j} \}_{j=1..k} \setminus \{\chi_i\} $.
 \item Any collection where we remove a subset from a set $(\{\chi_{j} \}_{j=1..k} \setminus \{\chi_i\}) \cup \{\chi_i^{sub} \} $, with $ \emptyset\neq \chi_i^{sub} \subset \chi_i$ and $\chi_i^{sub}\neq \{0\}$.
 \item Any collection where we merge two sets $(\{\chi_{j} \}_{j=1..k} \setminus \{\chi_i,\chi_{i'}\}) \cup \{\chi_i \cup \chi_{i'}\} $.
\end{enumerate}
\end{lem}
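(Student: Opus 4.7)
The plan is to verify each of the three claims directly from Definition~\ref{def: incom}, by reducing the sum-equality condition in the transformed collection to the sum-equality condition in the original collection, so that the hypothesized incommensurability of $\{\chi_j\}_{j=1..k}$ can be applied. The non-emptiness and ``not $\{0\}$'' conditions of the definition are immediate to check in all three cases, so the substance lies in the sum-equality implication.

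For item~1, removing $\chi_i$ from the collection, I would pad any decomposition in the reduced $(k-1)$-collection to a decomposition in the original $k$-collection by inserting a zero trial count for $\chi_i$. Since $\mathcal{S}_0(\chi_i) = \{0\}$ (the empty sum is $0$), the $i$-th summand equals $0$ in both decompositions being compared, the total sum is preserved, and the original incommensurability then yields equality of all remaining summands. For item~2, replacing $\chi_i$ by a non-empty subset $\chi_i^{\text{sub}} \subseteq \chi_i$ with $\chi_i^{\text{sub}} \neq \{0\}$, I would exploit the obvious inclusion $\mathcal{S}_m(\chi_i^{\text{sub}}) \subseteq \mathcal{S}_m(\chi_i)$: every decomposition valid in the restricted collection is already a valid decomposition in the original collection, so the original incommensurability forces term-by-term equality.

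The main case is item~3, merging $\chi_i$ and $\chi_{i'}$ into $\chi_{\text{new}} := \chi_i \cup \chi_{i'}$. The key observation is that every element $\tilde y_{\text{new}} \in \mathcal{S}_{\tilde m_{\text{new}}}(\chi_{\text{new}})$ splits as $a + b$ with $a \in \mathcal{S}_{m_a}(\chi_i)$, $b \in \mathcal{S}_{m_b}(\chi_{i'})$, and $m_a + m_b = \tilde m_{\text{new}}$, and analogously for the primed decomposition $\tilde y_{\text{new}}' = a' + b'$. Using these splits, I would lift two decompositions in the merged $(k-1)$-collection to two decompositions in the original $k$-collection, with count vectors $\bm m^{\text{ext}}, \bm n^{\text{ext}}$ still lying individually in $\mathbb{N}^k_N$ and with the same total sum. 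The original incommensurability then forces $a = a'$ and $b = b'$ componentwise, hence $\tilde y_{\text{new}} = a + b = a' + b' = \tilde y_{\text{new}}'$, while equality of the other components is inherited directly. Non-emptiness of $\chi_{\text{new}}$ is immediate, and it cannot equal $\{0\}$ since $\chi_i \neq \{0\}$ already contributes a nonzero element.

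The main (mild) obstacle I expect is in item~3: the split $(m_a, m_b)$ of $\tilde m_{\text{new}}$ need not agree with the split $(n_a, n_b)$ of $\tilde n_{\text{new}}$, and one might worry that this prevents applying the original incommensurability uniformly. The resolution is that Definition~\ref{def: incom} only requires each count vector to lie in $\mathbb{N}^k_N$ separately, which the lifted vectors do by construction, so the mismatch of splits is harmless.
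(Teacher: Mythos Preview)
Your proposal is correct and follows essentially the same approach as the paper: in each case you embed decompositions for the transformed collection into decompositions for the original collection (via padding with $m_i=n_i=0$, via the inclusion $\mathcal{S}_m(\chi_i^{\text{sub}})\subseteq \mathcal{S}_m(\chi_i)$, and via splitting $\tilde y_{\text{new}}=a+b$ with $a\in\mathcal{S}_{m_a}(\chi_i)$, $b\in\mathcal{S}_{m_b}(\chi_{i'})$), and then invoke the original incommensurability. Your explicit handling of the potentially mismatched splits $(m_a,m_b)\neq(n_a,n_b)$ in item~3, resolved by noting that Definition~\ref{def: incom} only requires $\bm m,\bm n\in\mathbb{N}^k_N$ separately, is in fact more carefully spelled out than in the paper's own proof.
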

\begin{proof}
All the points follow as particular cases of the implication~\eqref{eq: incom def}. For the point 1, take the vectors $\bm y$ and $\bm y'$ with $y_i=m_i=y_i' =n_i =0$  and such that $\suma{j\neq i} (y_j-y_j')$, Eq.~\eqref{eq: incom def} implies $y_j = y_j'$ of all $j\neq i$ implying that the collection without the set $\chi_i$ is incommensurable.

For the point 2, take two vectors $\bm y$ and $\bm y'$ such that $y_i$ and $y_i'$ are obtained without using the values from $\chi_i^{sub}$. Again Eq.~\eqref{eq: incom def} implies $y_j = y_j'$ of all $j\neq i$.

For the point 3, for any pair of $k$-vector $\bm y$ and $\bm y'$ with $\suma{j}(y_j-y_j') =0$ define the $(k-1)$ vectors $\tilde {\bm y}$ and $\tilde {\bm y}'$ obtained by summing the values $\tilde y_i = y_i+y_{i'}$, with
$\suma{j}(y_j-y_j') =0 \Longleftrightarrow \suma{j}(\tilde y_j-\tilde y_j') =0$. Since the original collection is incommensurable $ \suma{j}(\tilde y_j-\tilde y_j') =0$ uniquely specifies the vectors  $\bm y$ with $\bm y'$ and $\tilde {\bm y}$ with $\tilde {\bm y}'$.
\end{proof}

We now prove that the incommensurability rank of a random variable can always be obtained without shifting it's outcomes.  

\begin{lem} \label{Lem:Conicidence} For a discrete random variable $X$ taking values in $\chi\subset\mathds{R}$ one has $r(\chi)=r(X)$.
\end{lem}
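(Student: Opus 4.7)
The plan is to prove $r(X) = r(\chi)$ by establishing both inequalities. The direction $r(X) \geq r(\chi)$ is immediate: taking $a = 0$ in Definition~\ref{Def:r-RV} shows that any non-degenerate $k$-canonical prepartition of $\chi$ is also one of $\chi + 0$, so $r(X) \geq r(\chi)$.

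For the reverse $r(X) \leq r(\chi)$, I would assume $\chi' = \chi + a$ admits a non-degenerate $k$-canonical prepartition $\{\chi'_j\}_{j=1}^k$ and construct one for $\chi$. By point~2 of Lemma~\ref{lem: incomm properties} each $\chi'_j$ may be shrunk to a non-empty subset of size at most $2$ while preserving incommensurability; non-degeneracy allows us to keep at least one such set of size $2$. The natural candidate is $\chi_j := \chi'_j - a \subset \chi$: each $\chi_j$ is a non-empty lattice (lattice structure is shift-invariant), the collection is pairwise disjoint and non-degenerate, and $\chi_j \neq \{0\}$ because otherwise $\chi'_j = \{a\}$, a singleton that can be removed and handled separately.

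The core of the argument is a shift calculation: the map $y \mapsto y + ma$ bijects $S_m(\chi_j)$ with $S_m(\chi'_j)$, so for $y_j \in S_{m_j}(\chi_j)$, $y'_j \in S_{n_j}(\chi_j)$ with $\sum_j m_j = \sum_j n_j = N$,
\[
\sum_j y_j = \sum_j y'_j \iff \sum_j (y_j + m_j a) = \sum_j (y'_j + n_j a),
\]
and applying incommensurability of $\{\chi'_j\}$ yields $y_j - y'_j = (n_j - m_j) a$ for each $j$. Incommensurability of $\{\chi_j\}$ thus reduces to forcing $(n_j - m_j) a = 0$. For singletons $\chi_j = \{x\}$ the identity $(m_j - n_j)(x + a) = 0$ combined with $\chi'_j \neq \{0\}$ gives $m_j = n_j$. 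The remaining case is a two-element $\chi_j = \{u, v\}$, where $m_j \neq n_j$ is equivalent to the existence of a non-trivial integer relation $p(u+a) + q(v+a) = 0$, i.e., to $\chi'_j$ lying on a rational line through the origin.

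The main obstacle is handling such \emph{bad} two-element parts. My plan is to modify the prepartition of $\chi'$ in place: if $\chi'_{j_0} = \{u', v'\}$ with $u'/v' \in \mathbb{Q}$, I would look for an alternative $w' \in \chi' \setminus \bigcup_l \chi'_l$ not rationally aligned with $u'$, swap $v'$ for $w'$, and verify via points~1--3 of Lemma~\ref{lem: incomm properties} that the resulting $k$-collection is still incommensurable and non-degenerate. If no such $w'$ exists inside $\chi'$, then $\chi'$ is essentially one-dimensional over $\mathbb{Q}$ and the claim follows directly from Lemma~\ref{lem: rank lattice}. Iterating this substitution over bad indices yields a prepartition of $\chi'$ whose shift-back to $\chi$ is genuinely incommensurable. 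The principal technical difficulty lies in the bookkeeping that ensures termination of this procedure: one must check that each swap does not introduce a fresh rational alignment elsewhere, which calls for a careful analysis of the rational span structure of the unused elements of $\chi'$.
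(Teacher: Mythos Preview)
Your reduction via Lemma~\ref{lem: incomm properties} and the shift computation $y_j-y_j'=(n_j-m_j)a$ are correct, but the proposal has two genuine gaps. First, the ``swap'' step is not justified: Lemma~\ref{lem: incomm properties} lets you remove a set, shrink a set, or merge two sets, but it never lets you \emph{insert} a new element $w'$ into a set. So after replacing $v'$ by $w'$ in $\chi'_{j_0}$ you have no reason to believe the collection is still incommensurable, and the termination/bookkeeping problem you flag is real and unresolved. Second, ``remove the singleton $\{0\}$ and handle it separately'' drops the count from $k$ to $k-1$, which defeats the purpose.

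Both issues disappear if you shrink further, to \emph{exactly one} doubleton $\chi_1'=\{x_0',x_1'\}$ together with singletons $\chi_j'=\{x_j'\}$ for $j\ge 2$ (all $x_j'\neq 0$). Your own singleton argument then gives $m_j=n_j$ for every $j\ge 2$; the constraint $\sum_j m_j=\sum_j n_j=N$ forces $m_1=n_1$ as well, and hence $y_1-y_1'=(n_1-m_1)a=0$. No ``bad two-element part'' ever arises. This is essentially the paper's route, phrased slightly differently: the paper observes that for this one-doubleton shape, incommensurability is equivalent to the statement that the multiplicities $(n_0,n_1,\dots,n_r)$ are uniquely determined by $(t',N)$, a property that is manifestly shift-invariant since $t'=t+Na$. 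One then regroups the shifted-back values $x_0,\dots,x_r$ into a non-degenerate $r$-prepartition of $\chi$ by pairing any two of them into a doubleton, taking care that if some $x_i=0$ it is placed inside that doubleton rather than left as the forbidden singleton $\{0\}$.
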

\begin{proof}
    By construction we know that $r(\chi)\leq r(X)$. Consider the shifted set $\chi':=\{x+c : x\in \chi\}$ such that $r(\chi')= r(X)$. It admits a canonical prepartition $\chi_1,\dots, \chi_r$. By lemma~\ref{lem: incomm properties} we can drop elements from these sets until obtaining the incommensurable prepartition
    \begin{equation}
        \chi_1' = \{x_0',x_1'\},\, \chi_2' =\{x_2'\},\, \dots \chi_r' =\{x_r'\}
    \end{equation}
with $x_i'\neq 0$ for $i\geq 2$. We thus have that for all $n_0+n_1+\dots n_r=N\in \mathbb{N}$ the sum 
\begin{align}
    t' = n_0 x_0' + n_1 x_1' + \dots n_r x_r'
\end{align}
implies a unique decomposition of $t'=\sumab{i=1}{r} y_r'$ with $y_1' =n_0 x_0' + n_1 x_1'$, $y_2 =n_2 x_2' \, \dots \,y_r' =n_r x_r'$. 
In turns, this implies that $n_2,\dots, n_r$, and thus $n_0+n_1 = N- \sumab{i=2}{r} n_i$ 
are uniquely specified by $t'$. But $y_1'  =n_0 x_0' + n_1 x_1'$ then also has a unique solution for 
$n_1$ and $n_2$. Hence the value of $t' = \sumab{i=0}{r} n_0 x_0' + n_1 x_1' + \dots n_r x_r'$ specifies all $n_i$ uniquely.

Now consider the shifted values $x_i = x_i' -a$ form the original set.  The observed total $t = \suma{i} n_i x_i$ must also determine all the $n_0,
\dots,n_r$, since it determines $t'= t + N a$, which determines all the $n_i$. Finally, to obtain a non-degenerate $r$-canonical prepartition from the singleton sets $\{x_0\}, \dots \{x_r\}$ simply group any two elements into a set with two elements. In particular, if $x_i=0$ it must one of the grouped values. This shows that if the shifted set $\chi'$ admits a $r$-canonical prepartition, so does the set $\chi$, completing the proof. 
\end{proof}

\begin{lem} \label{lem: rank lattice}
 A non-deterministic lattice random variable has rank $r(X)=1$, conversely any random variable $X$ with finite domain $|\chi|=d$ and rank $r(X)=1$ is a lattice random variable.
\end{lem}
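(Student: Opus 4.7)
The plan is to prove the two implications separately.

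For the forward direction, I would argue by contradiction. Since $|\chi|\geq 2$ for a non-deterministic $X$, we already have $r(X)\geq 1$. Suppose $r(X)\geq 2$: some shifted set $\chi'=\chi+c$ admits a non-degenerate $2$-canonical prepartition $\{\chi_1',\chi_2'\}$. Because $\chi$ is a lattice, so is $\chi'$, and $\chi'\subset\{h n+b:n\in\mathbb{Z}\}$ for some $h>0$ and $b\in\mathbb{R}$. Non-degeneracy lets me pick two distinct $x_0,x_1\in\chi_1'$, and $\chi_2'\neq\{0\}$ lets me pick a non-zero $x_2\in\chi_2'$. Writing $x_i=h n_i+b$ with $n_0\neq n_1$, the integer triple $(a_0,a_1,a_2):=(n_1-n_2,\,n_2-n_0,\,n_0-n_1)$ satisfies $\sum_i a_i=0$ and $\sum_i a_i n_i=0$ while $a_2=n_0-n_1\neq 0$. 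For $C\in\mathbb{N}$ large enough, the vectors $m_i':=C$ and $m_i:=C+a_i$ are non-negative and sum to $N=3C$, and by direct computation, using $\sum_i a_i n_i=0$, the two decompositions yield the same total $y_1+y_2=y_1'+y_2'$ while $y_2-y_2'=a_2 x_2\neq 0$, contradicting incommensurability.

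For the backward direction I would argue by contrapositive: if $|\chi|=d<\infty$ and $\chi$ is not a lattice, then $r(X)\geq 2$. Since any set with $|\chi|\leq 2$ is a lattice, we must have $d\geq 3$. Fix $x_0\in\chi$ and let $\chi'':=\chi-x_0$. The first step is to observe that, for finite $\chi$, non-latticeness is equivalent to $\dim_{\mathbb{Q}}\mathrm{span}_{\mathbb{Q}}(\chi''\setminus\{0\})\geq 2$: if the dimension were $1$, every element of $\chi''$ would be a rational multiple of a single $h_0$, and clearing denominators would embed $\chi''$ in $h\mathbb{Z}$ and hence $\chi$ in $h\mathbb{Z}+x_0$. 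Picking $\mathbb{Q}$-linearly independent $y_1,y_2\in\chi''\setminus\{0\}$, I would define $\chi_1:=\{0,y_1\}$ and $\chi_2:=\{y_2\}$. These are disjoint non-empty lattices in $\chi''$, neither equal to $\{0\}$; sums from them take the form $k y_1+m_2 y_2$ with $0\leq k\leq m_1$ and $m_1+m_2=N$, and $\mathbb{Q}$-linear independence of $y_1,y_2$ forces $k=k'$ and $m_2=m_2'$ whenever two such sums coincide. Hence $\{\chi_1,\chi_2\}$ is a non-degenerate $2$-canonical prepartition of the shift $\chi''$, giving $r(X)\geq 2$ and contradicting the hypothesis.

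The hard part is the forward direction: one has to produce an explicit collision of decompositions inside the lattice structure, which requires solving the small integer system $\sum_i a_i=\sum_i a_i n_i=0$ and checking non-negativity of the multiplicities $m_i,m_i'$ for large $N$. The algebra is elementary but must be handled with care. The backward direction is essentially a packaging argument once one establishes the equivalence between non-latticeness and $\dim_{\mathbb{Q}}\mathrm{span}_{\mathbb{Q}}(\chi''\setminus\{0\})\geq 2$, which itself uses finiteness of $\chi$ through a common-denominator argument.
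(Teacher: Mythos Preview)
Your proof is correct in both directions.

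For the forward direction, both you and the paper reduce to showing that a pair $\chi_1'=\{x_0,x_1\}$, $\chi_2'=\{x_2\}$ inside a lattice is commensurable by exhibiting an explicit collision. Your construction is more streamlined: the antisymmetric triple $(a_0,a_1,a_2)=(n_1-n_2,\,n_2-n_0,\,n_0-n_1)$ kills both $\sum a_i$ and $\sum a_i n_i$ at once, so the verification is a two-line computation. The paper instead scales by $|k_2-k_3|$ and $|k_1-k_3|$ and then splits into two cases according to the sign of $(k_1-k_3)(k_2-k_3)$; this is a bit more laborious but reaches the same conclusion.

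For the backward direction the two arguments are genuinely different. The paper works directly: assuming $r(X)=1$, every pair $\{x_1,x_2\},\{x_i\}$ must be commensurable, and the resulting system of integer equations forces $x_i=x_1+\tfrac{k_2^{(i)}}{k_3^{(i)}}(x_2-x_1)$; clearing the finitely many denominators yields the lattice. You take the contrapositive and package it via the rational dimension of the shifted set $\chi''=\chi-x_0$: finiteness gives the equivalence ``not lattice $\Leftrightarrow \dim_{\mathbb Q}\mathrm{span}_{\mathbb Q}(\chi''\setminus\{0\})\geq 2$'', and two $\mathbb Q$-independent elements $y_1,y_2$ immediately give the non-degenerate $2$-canonical prepartition $\{\{0,y_1\},\{y_2\}\}$. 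Your route is shorter and more conceptual; the paper's route is constructive, producing the span $h$ and offset $a$ explicitly, which is perhaps more informative if one actually wants the lattice parameters.
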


\begin{proof} 
A deterministic random variable has $r(X)=0\neq1$. A random variable that takes two values is always lattice and has $r(X)=1$. Hence, we now assume $|\chi| \geq 3$.

We will first demonstrate that a non-deterministic lattice random variables $X$ has $r(X)=1$. We need to show that such $X$ admits no non-degenerate canonical $k$-partition with $k\geq 2$. By Lemma~\ref{lem: incomm properties} it is sufficient to show that any partition  $\chi_1 =\{x_1,x_2\}$ and $\chi_2=\{x_3\}$ and $\overline\chi=\chi\setminus \{x_1,x_2,x_3\}$ for any $x_1,x_2,x_3\in \chi$ is not $2$-canonical, since by applying the operations 1 and 2 from the Lemma any $k$-canonical non-degenerate partition can be reduced to such a $2$-canonical non-degenerate partition. Since $\chi$ is a lattice it is thus sufficient to show that the two set 
\begin{equation}
   \{a + h \,k_1,a + h \,k_2\} \qquad \text{and} \qquad \{a + h \,k_3\} \qquad \text{with} \qquad k_1,k_2,k_3 \in \mathbb{Z}, 
\end{equation}
are commensurable. To see this, for some $n_1',n_2',m_1',m_2'\in \mathbb{N}$, consider the two following vectors of natural numbers 
\begin{equation}
    \bm n =
    \left(\begin{array}{c}
    n_1=n_1' |k_2-k_3| \\
    n_2= n_2' |k_1-k_3| \\
    n_3=N - n_1 -n_2
    \end{array}\right), \quad
    \bm m = \left(\begin{array}{c}
    m_1=m_1' |k_2-k_3| \\
    m_2= m_2' |k_1-k_3| \\
    m_3=N - m_1 -m_2
    \end{array}\right) \in \mathbb{N}_N^3 
\end{equation}
where $N$ is chosen large enough such that $n_3,m_3\geq 0$. This choice defines the following real vectors
\begin{equation}
    \bm y = \binom{y_1}{y_2}=\binom{n_1 x_1 + n_2 x_2}{n_3 x_3} \quad \text{and}\qquad \bm y' = \binom{y_1'}{y_2'}=\binom{m_1 x_1 + m_2 x_2}{m_3 x_3}.
\end{equation}
By definition~\ref{def: incom} this choice exhibits the commensurability of the sets if $y_1+y_2 = y_1' +y_2'$ and $y_2 \neq y_2'$. We now show that there is always a choice of $n_1',n_2',m_1',m_2'$ fulfilling both conditions. The second condition can be rewritten as 
\begin{align} &y_2 \neq y_2' \quad \Longleftrightarrow \quad 
n_1 + n_2 \neq m_1 + m_2 \quad \Longleftrightarrow \quad 
(n_1'-m_1') |k_2-k_3| + (n_2'-m_2') |k_1-k_3| \neq 0. \label{eq: cond neq}
\end{align}
In turn, for the first one we get
\begin{align} &y_1+y_2 = y_1' +y_2' \qquad \Longleftrightarrow\quad n_1 x_1 + n_2 x_2 +(N-n_1-n_2) x_3 = m_1 x_1 + m_2 x_2 + (N-m_1-m_2) x_3 \qquad \Longleftrightarrow\\
&n_1 (x_1-x_3) + n_2 (x_2-x_3) = m_1 (x_1-x_3) + m_2 (x_2-x_3)  \qquad \Longleftrightarrow\\
    & n_1' |k_2-k_3|(k_1-k_3) + n_2' |k_1-k_3|(k_2-k_3) =
    m_1' |k_2-k_3|(k_1-k_3) + m_2' |k_1-k_3|(k_2-k_3) \qquad \Longleftrightarrow
    \\ 
    &(n_1' \pm n_2' ) |k_2-k_3|(k_1-k_3) =  (m_1' \pm m_2' ) |k_2-k_3|(k_1-k_3) \qquad \Longleftrightarrow \\
    &(n_1' \pm n_2' )  =  (m_1' \pm m_2' ), \label{eq: cond eq}
\end{align}
where $\pm= {\rm sign}(k_1-k_3)(k_2-k_3)$.

For ${\rm sign}(k_1-k_3)(k_2-k_3)=+1$ we know that $|k_2-k_3|\neq |k_1-k_3|$ and can choose $m_1' = n_1'+1$ with $m_2' = n_2'-1$ to satisfy both conditions \eqref{eq: cond neq} and \eqref{eq: cond eq}.

For ${\rm sign}(k_1-k_3)(k_2-k_3)=-1$ we can chose $m_1' = n_1'+1$ with $m_2' = n_2'+1$ to satisfy both conditions. Hence, the sets are always commensurable. \\

Next, we show the converse direction, $r(X)=1\implies \,X$ is lattice for all variable with a finite domain $|\chi|=d$. By assumption $r(X)=1$, we know that any partition of $\chi$ in $\{\chi_1, \dots, \chi_k,\overline \chi\}$, where $\chi_j$ are non-empty and $|\chi_1|\geq 1$ are not canonical. In particular,  any partition with $\chi_1=\{x_1,x_2\},  \chi_2=\{x_i\neq 0\}$ and $ \overline \chi = \chi\setminus \{x_1,x_2,x_i\}$ is not canonical. If $0 \in \chi$ we will chose $x_1=0$ and any other value for $x_2$, and otherwise pick both $x_1$ and $x_2$ arbitrarily, then we let $x_i$ run through the remaining (non-zero) elements of the set. Since the sets $\chi_1$ and $\chi_2$ are lattices, it follows that they are commensurable (not incommensurable).  
Hence, for each $x_i$ there exist $N$ and  $\bm n,\bm m \in \mathbb{N}_N^3$ such that 
\begin{align}
\begin{cases}
    n_1 x_1 + n_2 x_2 +n_3 x_i &= m_1 x_1 + m_2 x_2 + m_3 x_i  \\
    n_1+n_2+n_3 &= m_1+m_2+m_3 \\
n_3 x_i & \neq   m_3 x_i
\end{cases} \Longleftrightarrow
\begin{cases}
    x_i &= \frac{n_1-m_1}{m_3-n_3}x_1 + \frac{n_2-m_2}{m_3-n_3}x_2    \\
   m_3-n_3 &= (n_1-m_1) +(n_2-m_2) \\
m_3-n_3  & \neq  0 
\end{cases}
\end{align}
where we used the fact that $x_i \neq 0$. Defining the integers $k_3^{(i)}=m_3-n_3\neq 0$  with $k_2^{(i)}=n_2 -m_2$ and combining the first two equations we obtain the following identities
\begin{align}
    x_i &=  x_1 + \frac{k_2^{(i)}}{k_3^{(i)}}(x_2-x_1) \quad \forall \, i\geq 3 \\
    x_1 &= x_1 \\
    x_2 &= x_1 + (x_2-x_1)
\end{align}
Defining $h = \frac{(x_2-x_1)} {\prod_{j=3}^d k_3^{(j)}} \neq 0$ and $a=x_1$ we can rewrite these equations as
\begin{align}
    x_i &=  a + h \left( k_2^{(i)} \prod_{j=3, j\neq i}^d k_3^{(j)}\right ) \quad \forall \, i\geq 3 \\
    x_1 &= a +  h \cdot 0\\
    x_2 &= a + h \left(\prod_{i=3}^d k_3^{(j)}\right),
\end{align}
since $0, \left(\prod_{i=3}^d k_3^{(j)}\right)$ and $\left( k_2^{(i)} \prod_{j=3, j\neq i}^d k_3^{(j)}\right )$ are integers for all $i$, we conclude that $\chi$ is a lattice.
\end{proof}

It is worth mentioning that when $\chi$ is not finite, it can have unit rank $r(\chi)=1$ without being a lattice. For example, any set $\chi \subseteq \mathbb{Q}$ that contains elements arbitrarily close to 0 is not a lattice but has $r(\chi)=1$. \\

\bibliography{biblio}
\end{document}